\crefname{appsec}{Appendix}{Appendices}
\newtheorem{theorem}{Theorem}[section]
\newtheorem*{namedtheorem}{\theoremname}
\newcommand{\theoremname}{testing}
\newtheorem{lemma}[theorem]{Lemma}
\newtheorem{proposition}[theorem]{Proposition}
\newtheorem{corollary}[theorem]{Corollary}
\newtheorem*{question*}{Question}
\theoremstyle{definition}
\newtheorem{remark}[theorem]{Remark}
\newtheorem{open}[theorem]{Open Problem}
\theoremstyle{plain}
\newcommand{\nremk}{[n]\backslash[k]}
\newcommand{\inr}[2]{\langle#1,#2\rangle}
\title{Smoothed analysis of the condition number under low-rank perturbations}
\author{
Rikhav Shah \\
UC Berkeley \\
\tt{rikhav.shah@berkeley.edu}
\and 
Sandeep Silwal \thanks{Research supported by the NSF
Graduate Research Fellowship under Grant No. 1122374. 
} \\
MIT \\
\tt {silwal@mit.edu}
}
\date{}
\begin{document}
\maketitle
\global\long\def\R{\mathbb{R}}

\global\long\def\S{\mathbb{S}}

\global\long\def\Z{\mathbb{Z}}

\global\long\def\C{\mathbb{C}}

\global\long\def\Q{\mathbb{Q}}


\global\long\def\P{\mathbb{P}}

\global\long\def\F{\mathbb{F}}

\global\long\def\U{\mathcal{U}}

\global\long\def\V{\mathcal{V}}

\global\long\def\E{\mathbb{E}}

\global\long\def\Ev{\mathscr{Rk}}

\global\long\def\Dg{\mathscr{D}}

\global\long\def\Ndg{\mathscr{ND}}

\global\long\def\Rv{\mathcal{R}}

\global\long\def\Gv{\mathscr{Null}}

\global\long\def\Hv{\mathscr{Orth}}

\global\long\def\Supp{{\bf Supp}}

\global\long\def\Sv{\mathscr{Spt}}

\global\long\def\ring{\mathfrak{R}}

\global\long\def\Bad{{\boldsymbol{B}}}

\global\long\def\supp{{\bf supp}}

\global\long\def\A{\mathcal{A}}

\global\long\def\L{\mathcal{L}}

\newcommand{\event}{\mathcal E}
\newcommand{\st}{\,|\,}

\newcommand{\p}{\mathbb P}

\newcommand{\eps}{\varepsilon}

\newcommand{\1}{\boldsymbol 1}
\newcommand{\N}{\mathcal{N}}

\begin{abstract}
Let $M$ be an arbitrary $n$ by $n$ matrix of rank $n-k$. We study the condition number of $M$ plus a \emph{low-rank} perturbation $UV^T$ where $U, V$ are $n$ by $k$ random Gaussian matrices. Under some necessary assumptions, it is shown that $M+UV^T$ is unlikely to have a large condition number. The main advantages of this kind of perturbation over the well-studied dense Gaussian perturbation, where every entry is independently perturbed, is the $O(nk)$ cost to store $U,V$ and the $O(nk)$ increase in time complexity for performing the matrix-vector multiplication $(M+UV^T)x$. This improves the $\Omega(n^2)$ space and time complexity increase required by a dense perturbation, which is especially burdensome if $M$ is originally sparse. Our results also extend to the case where $U$ and $V$ have rank larger than $k$ and to symmetric and complex settings. We also give an application to linear systems solving and perform some numerical experiments. Lastly, barriers in applying low-rank noise to other problems studied in the smoothed analysis framework are discussed.
\end{abstract}

\section{Introduction}
The smoothed analysis framework as introduced by Spielman and Teng aims to explain the performance of algorithms on real world inputs through a hybrid of worse-case and average case analysis \cite{smoothed_orig}. In this framework, we are given an arbitrary input that is then perturbed randomly according to some some specified noise model. We apply this framework to study the condition number of a matrix perturbed with low-rank Gaussian noise. The condition number is of interest since it influences the behavior of many algorithms in numerical linear algebra, both in theory and in practice.

To give context to our result, recall that the condition number of a $n \times n$ matrix $M$ with singular values $s_1(M) \ge \cdots \ge s_n(M)$ is defined as the ratio $s_1(M)/s_n(M)$. Generally, a condition number is `well behaved' if  $s_1(M)/s_n(M) = n^{O(1)}$. It can be shown that under very mild and natural conditions, we have $s_1(M) \le n^{O(1)}$. For instance, this readily follows from Proposition \ref{prop:s1} if the entries are not too large compared to the size of $M$ or if the entries have sufficient tail concentration far from the origin. Since our random variables are Gaussians, this easily holds. Therefore, the bulk of the work lies in controlling the \emph{smallest} singular value $s_n(M)$. Extending a result of Edelman \cite{edelman}, Sankar, Spielman and Teng showed the following result in \cite{sankarST}:

\begin{theorem}\label{thm:ST}
There is a constant $C > 0$ such that the following holds. Let $M$ be an arbitrary matrix and let $N_n$ be a random matrix whose entries are iid Gaussian. Let $M_n = M+N_n$. Then for any $t > 0$,
\[\p(s_n(M_n) \le t) \le Cn^{1/2}t.\]
\end{theorem}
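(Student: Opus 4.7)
The plan is to reduce the theorem to a one-dimensional Gaussian anti-concentration estimate via the column-distance formula for the smallest singular value. Set $B := M + N_n$ and let $b_1,\ldots,b_n$ be the columns of $B$. Define $d_i := \mathrm{dist}\bigl(b_i, \mathrm{span}\{b_j : j\ne i\}\bigr)$. From $B^{-1}B = I$, the $i$-th row of $B^{-1}$ is orthogonal to every $b_j$ with $j\ne i$ and has inner product $1$ with $b_i$; a short calculation then gives that its norm is exactly $1/d_i$. Therefore
\[
\|B^{-1}\|_F^2 = \sum_{i=1}^n \frac{1}{d_i^2},
\]
and combining with $\|B^{-1}\|_{\mathrm{op}} \le \|B^{-1}\|_F$ yields $s_n(B) \ge \min_i d_i / \sqrt n$, so it suffices to upper bound $\P(\min_i d_i \le \sqrt n\, t)$.

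For each $i$ I would condition on the columns $\{b_j : j\ne i\}$. Then $H_i := \mathrm{span}\{b_j : j\ne i\}$ is a fixed subspace of dimension at most $n-1$, and $b_i = M_{\cdot,i} + g_i$ with $g_i \sim \N(0,I_n)$ independent of $H_i$. For any unit normal $\hat n_i$ to $H_i$, one has $d_i \ge |\langle \hat n_i, b_i\rangle|$, and $\langle \hat n_i, b_i\rangle$ is a univariate Gaussian of unit variance whose density is bounded by $(2\pi)^{-1/2}$; hence $\P(d_i\le s)\le s\sqrt{2/\pi}$ uniformly in the conditioning. A direct union bound then produces
\[
\P(s_n(B)\le t) \le n\cdot \sqrt{2/\pi}\,\sqrt n\, t = \sqrt{2/\pi}\, n^{3/2}\, t,
\]
which is a factor of $n$ weaker than the claim.

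The main obstacle is saving this remaining factor of $n$. One $\sqrt n$ is lost in the op-to-Frobenius inequality (tight only when $B^{-1}$ is close to isotropic, yet the regime $s_n \to 0$ is precisely the opposite), and another $\sqrt n$ in the crude union bound. To recover the sharp rate I would first use the bilateral orthogonal invariance $U N_n V^\top \stackrel{d}{=} N_n$ to reduce WLOG to $M = \mathrm{diag}(\sigma_1,\ldots,\sigma_n)$, and then work directly in SVD coordinates on $B$: the joint density of the singular values of $B$ carries the Jacobian factor $\prod_{i<j}|s_i^2 - s_j^2|$, and the $s_n\to 0$ marginal is controlled by integrating this against a shifted Gaussian density. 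For $M=0$ this recovers Edelman's original $\sqrt n$-type estimate, and for general diagonal $M$ a log-concavity/comparison argument should dominate the shifted-Gaussian integral by its $M=0$ value (intuitively, any drift only pushes $s_n$ away from zero), inheriting the factor of $\sqrt n$. This comparison step is the technical heart of the argument and the part I expect to be the most delicate.
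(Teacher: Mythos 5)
Your first part is correct and cleanly argued: the column-distance identity $\|B^{-1}\|_F^2=\sum_i d_i^{-2}$, the op-to-Frobenius step, the conditioning, and the per-column Gaussian anti-concentration all check out, and you correctly diagnose that the result is $O(n^{3/2}t)$, a factor of $n$ short. The genuine gap is in the repair you sketch for that last factor of $n$. The ``log-concavity/comparison'' step --- that a deterministic drift $M$ can only push $s_n$ away from zero relative to the centered case $M=0$ --- is exactly the kind of statement that looks plausible but is not known to hold in this generality, is not proved in your argument, and is not how Sankar--Spielman--Teng establish the bound. You yourself flag it as ``the technical heart'' and ``the most delicate,'' which is another way of saying the proof isn't there yet. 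A route through the exact joint density of singular values is essentially a fresh (and much harder) problem once $M\ne 0$, because the shifted ensemble is no longer rotationally invariant in the bi-unitary sense needed to integrate out the eigenvectors cleanly.

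The idea that actually closes the gap is different, and it is the one the present paper adapts in Section \ref{sec:beyond} (Theorem \ref{thm:beyond_k}, Lemma \ref{lem:single_vec}). Instead of union-bounding over all $n$ columns and then paying another $\sqrt n$ in the op-to-Frobenius step, one bounds $\|B^{-1}\|$ through a \emph{single random direction}. Take $s$ uniform on $\mathcal S^{n-1}$. If $y$ is the unit vector achieving $\|B^{-1}y\|=\|B^{-1}\|$, then $\|B^{-1}s\|\ge |y^Ts|\,\|B^{-1}\|$ (Lemma \ref{lem:SVD}), and $|y^Ts|\ge 1/\sqrt n$ with probability bounded below by an absolute constant (Lemma \ref{lem:gaussian2}). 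Hence
\[
\p\bigl(\|B^{-1}\|\ge 1/t\bigr)\;\lesssim\;\p\bigl(\|B^{-1}s\|\ge 1/(t\sqrt n)\bigr),
\]
and for a \emph{fixed} direction $s$ one applies your normal-vector computation exactly once: after a rotation, $\|B^{-1}s\|=1/|\langle w,r_1\rangle|$ with $w$ a fixed unit vector and $r_1$ an independent Gaussian row, so $\p(\|B^{-1}s\|\ge \tau)\le C/\tau$. Combining gives $\p(s_n(B)\le t)\le C\sqrt n\,t$. This replaces both of your lossy steps --- the $n$-fold union bound and the Frobenius relaxation --- with one appeal to anti-concentration, paying only a single $\sqrt n$ from the random-projection inequality $|y^Ts|\ge 1/\sqrt n$. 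Your proposal would be complete if you substituted this random-direction reduction for the speculative comparison-of-densities argument.
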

Later, Tao and Vu generalized the above result where the entries of $N_n$ are independent copies of a general class of random variables that have mean zero and bounded variance \cite{TaoVuCondition1, TaoVuCondition2}. 

\subsection{Motivation for low-rank noise}
The main drawback of these results is that \emph{every} entry of $M$ must be perturbed by independent noise. This means that if such a perturbation was carried out in practice, we would need to first draw $n^2$ random numbers and store them. This is more problematic if $M$ is sparse to begin with and stored in a data structure utilized for sparse matrices. These observations lead us to ask if we can achieve well-conditioned matrices with \textbf{less randomness and less space}.  Our results demonstrate the answer is yes by replacing the dense Gaussian ensemble $N_n$ with a \textit{low-rank} matrix.

To further motivate our work, we note that in the context of smoothed analysis, Theorem \ref{thm:ST} is used to explain the phenomenon that matrices encountered in practice frequently have `well behaved' condition number. For instance, many matrices can arise out of empirical observations or measurements which can be subject to some inherent noise. 

Similarly, low-rank noise is also natural and arises in many scenarios. Low-rank noise has been studied as a  noise model in least squares \cite{app1}, compressed sensing \cite{app2, app3, app4}, and imaging \cite{app5, app6} to name a few applications. In addition, low-rank noise also arises in many applied sciences model, for instance, see the examples in \cite{app7} and references therein where examples are given for the eigenvalue problem $Mx = x + Ex$, for a low rank matrix $E$, arising in scientific modelling. Furthermore, one of the most frequent properties that matrices in data science posses is having low rank (see \cite{ds1, ds2, ds3, ds4} and references within). Thus for these matrices, the traditional smoothed analysis viewpoint of having a dense perturbation cannot apply due to their low rank requirement. 

Hence, an additional motivation of our work is that studying low-rank noise is a natural step in smoothed analysis which we initiate.


\subsection{Our results}
As stated before, we replace the dense Gaussian perturbation $N_n$ with a low-rank perturbation. Our main result is the following.

\begin{theorem}[Theorem \ref{thm:main_rank_k} simplified]\label{thm:rank_k_update_simp}
    Let $1 \le k \le n/2$ and $M$ be a matrix of rank $n-k$.
    Let $U,V$ be $n\times k$ matrices with i.i.d.\ $\mathcal{N}(0,1)$ entries. Then
    \[
    \p\left(
        s_n(M+UV^T)\le \frac{\eps}{{n\,k}}\, {s_{n-k}(M)}
    \,\right)\le C\,\sqrt{\eps}+\exp(-c\,n)
    \]
    so long as $s_{n-k}(M)<n$
    for absolute constants $C,c > 0$.
\end{theorem}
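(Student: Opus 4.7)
The plan is to apply the singular value decomposition of $M$ to rotate coordinates into a block structure, analyze the resulting $2 \times 2$ block matrix via a case split on a minimizing unit vector, and conclude using standard Gaussian concentration together with an Edelman-type anti-concentration bound.

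First I would use the SVD $M = P\Sigma Q^T$ with $\Sigma$ having exactly $n-k$ nonzero singular values, and write $P = [P_1, P_2]$, $Q = [Q_1, Q_2]$ with $P_2, Q_2 \in \R^{n \times k}$ spanning the left and right null spaces of $M$. Since orthogonal transformations preserve singular values, $s_n(M + UV^T) = s_n(B)$ where $B := P^T(M+UV^T)Q$. Rotational invariance of the Gaussian lets me replace $P^T U$ and $Q^T V$ by fresh independent $n \times k$ standard Gaussians, whose top and bottom blocks I call $U_1, V_1 \in \R^{(n-k) \times k}$ and $U_2, V_2 \in \R^{k \times k}$. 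With $\Sigma_0 := \mathrm{diag}(s_1(M), \ldots, s_{n-k}(M))$, this gives
\[ B = \begin{pmatrix} \Sigma_0 + U_1 V_1^T & U_1 V_2^T \\ U_2 V_1^T & U_2 V_2^T \end{pmatrix}. \]

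Next I would derive a deterministic lower bound for $s_n(B)$ in terms of $s_k(U_2)$, $s_k(V_2)$, $\|U_1\|$, and $\|V_1\|$. For a unit vector $x = (y, z)$ introduce $w := V_1^T y + V_2^T z$; direct computation gives $Bx = (\Sigma_0 y + U_1 w,\; U_2 w)$. Assuming $\|Bx\| \le \tau$, the bottom block forces $\|w\| \le \tau/s_k(U_2)$; feeding this back into the top block gives $\|y\| \le \tau(1 + \|U_1\|/s_k(U_2))/s_{n-k}(M)$; and the identity $V_2^T z = w - V_1^T y$ combined with $s_k(V_2)\|z\| \le \|V_2^T z\|$ controls $\|z\|$. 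Since $\|y\|^2 + \|z\|^2 = 1$, at least one of $\|y\|,\|z\|$ exceeds $1/\sqrt 2$, and each case translates to a lower bound on $\tau$. Taking the smaller of the two yields
\[ s_n(B) \;\ge\; \frac{1}{\sqrt 2}\, \min\!\left( \frac{s_{n-k}(M)}{1 + \|U_1\|/s_k(U_2)},\; \frac{s_k(U_2)\,s_k(V_2)\,s_{n-k}(M)}{s_{n-k}(M) + \|V_1\|(s_k(U_2)+\|U_1\|)} \right). \]

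Finally I would bound the probability that this lower bound falls below $\eps s_{n-k}(M)/(nk)$. Define the good event that $\|U_1\|, \|V_1\| \le C\sqrt n$ and $s_k(U_2), s_k(V_2) \ge \sqrt{\eps/k}$. Using $s_{n-k}(M) < n$ and $k \le n/2$, a short calculation shows that on this event both terms in the displayed minimum are $\Omega(\eps\, s_{n-k}(M)/(nk))$, the smaller being the second. Standard Gaussian spectral-norm concentration gives $\p(\|U_1\| > C\sqrt n) \le e^{-cn}$ (and similarly for $V_1$), while Edelman's bound for the smallest singular value of a $k \times k$ standard Gaussian gives $\p(s_k(U_2) < t) \le C' t\sqrt k$; setting $t = \sqrt{\eps/k}$ gives $C'\sqrt\eps$, and the union bound over the four failure events closes the argument.

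The delicate point is obtaining the $\sqrt\eps$ scaling rather than something weaker like $\eps^{1/4}$. This square-root loss is intrinsic: the dominant obstruction is $s_k(U_2)\,s_k(V_2)$, a product of two independent least singular values, and threshold balancing forces the choice $t = \sqrt{\eps/k}$ for each individual factor so that Edelman's $\sqrt k$ combines with $t$ to produce exactly $\sqrt\eps$. Verifying that $\|V_1\|(s_k(U_2)+\|U_1\|) = O(\sqrt{nk})$ dominates $s_{n-k}(M)$ in the denominator of the second term (using $s_{n-k}(M) < n$) is routine but needs to be tracked to confirm the claimed $1/(nk)$ factor.
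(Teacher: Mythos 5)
Your proposal is correct and shares the paper's overall strategy: rotate by the SVD of $M$ to put the zero singular values in a $k\times k$ block, expose the block structure of the Gaussian factors, and then condition on four spectral events (two upper tail bounds for the rectangular $(n-k)\times k$ blocks, two lower tail bounds from Edelman's estimate for the $k\times k$ blocks) whose failure probabilities add up to $C\sqrt\eps + e^{-cn}$. The difference is in the middle linear-algebra step. The paper applies the Schur complement formula for the block inverse (its Lemma \ref{lem:op_norm_of_block_inverse}), and the crux there is the cancellation $I - V_{[k]}^T(U_{[k]}V_{[k]}^T)^{-1}U_{[k]} = 0$, which makes the Schur complement collapse to $M'$ and produces a clean closed-form upper bound on $s_n(M+UV^T)^{-1}$. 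You instead take a (near-)minimizing unit vector $x=(y,z)$, observe that the rank-$k$ structure forces $Bx = (\Sigma_0 y + U_1 w,\ U_2 w)$ with $w = V_1^T y + V_2^T z$, and case-split on $\|y\|\ge 1/\sqrt2$ or $\|z\|\ge 1/\sqrt2$. This variational route is more elementary, never mentions Schur complements, and arrives at a bound of the same quality; the paper's route yields a slightly tidier explicit inequality for $s_n^{-1}$. Both hinge on the same structural fact, namely that after rotation the bottom block of $Bx$ involves only $U_2 w$.

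Two small points to clean up. First, your claim that $\|V_1\|(s_k(U_2)+\|U_1\|) = O(\sqrt{nk})$ is not right: on the good event $\|V_1\|,\|U_1\| = O(\sqrt n)$, so the dominant piece $\|V_1\|\|U_1\|$ is $O(n)$, not $O(\sqrt{nk})$. Fortunately this does not change the conclusion, since the denominator of your second term is then $s_{n-k}(M) + O(n) = O(n)$ (using $s_{n-k}(M)<n$), and dividing the numerator $(\eps/k)s_{n-k}(M)$ by $O(n)$ still yields $\Omega(\eps\, s_{n-k}(M)/(nk))$. Second, to bound $s_k(U_2)$ from \emph{above} (needed in that same denominator) your good event must also include an upper bound such as $\|U_2\| \le C\sqrt k$ (or simply condition on $\|U\| \le C\sqrt n$, which controls all four blocks at once); you only listed lower bounds on $s_k(U_2), s_k(V_2)$. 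This costs only another $e^{-ck}$ (absorbed into $e^{-cn}$ since $k\le n/2$ is not enough; note $e^{-ck}$ can be larger than $e^{-cn}$, but $\|U\|\le C\sqrt n$ with probability $1-e^{-cn}$ fixes this), so the stated bound is unaffected once the event is stated correctly.
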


Theorem \ref{thm:rank_k_update_simp}
roughly states that if we add a rank $k$ random perturbation to a rank $n-k$ matrix, then the smallest $k$ singular values of the matrix improve. The advantage of our approach is that the matrices $U,V$ can be stored separately from $M$ using $O(nk)$ space. This is especially useful in the case that $M$ is sparse to begin with and is stored using a data structure optimized for sparse matrices. Furthermore, a matrix vector product operation $(M+UV^T)x$ can be computed in $\text{Time}(Mx) + O(nk)$ time where $\text{Time}(Mx)$ is the time required to compute $Mx$. For instance, when $k = O(1)$, the extra increase in space and time complexity is only $O(n)$. This is a significant improvement in both the space required to store a dense Gaussian random matrix $G$ and computing $(M+G)x$ which are both $\Omega(n^2)$.  We prove Theorem \ref{thm:rank_k_update_simp} in Section \ref{sec:proof} and discuss the dependence on the terms $s_{n-k}(M)$ and $\sqrt{\eps}$ which we show is unavoidable (see remarks \ref{rmk:unavoid1}, \ref{rmk:unavoid2}).

Theorem \ref{thm:rank_k_update_simp} can be generalized in a variety of ways. First, our result carries over to the case where we pick the columns of $U,V$ to be from a rotationally invariant distribution, such as uniform vectors on the unit sphere. We show that our result also carries over to the case where $M$ is symmetric and we pick $U = V$ to preserve symmetry. 

It is natural to ask if a broader family of random variables can be used in Theorem \ref{thm:rank_k_update_simp}. In Section \ref{sec:subgauss} we show that our result \emph{cannot hold} if we pick the entries of $U,V$ to be from the Rademacher distribution. This is in contrast to the dense perturbation case where Gaussian random variables can be replaced with a wide variety of other distributions such as sub-Gaussian random variables (which include Rademachers). On the other hand, we can get an analogous statement to Theorem \ref{thm:main_rank_k} if we allow for complex Gaussian perturbations.

\begin{theorem}[Theorem \ref{thm:main_rank_k_complex} simplified]\label{thm:rank_k_update_complex_simp}
    Let $1 \le k \le n/2$ and $M$ be a matrix of rank $n-k$.
    Let $U,V$ be $n\times k$ complex matrices with real and imaginary parts in each entry drawn independently from $\mathcal{N}(0,1/2)$. Then
    \[
    \p\left(
        s_n(M+UV^T)\le \frac{\eps}{{n\,k}}\, {s_{n-k}(M)}
    \,\right)\le C{\eps}+\exp(-c\,n)
    \]
    so long as $s_{n-k}(M)<n$
    for absolute constants $C,c > 0$.
\end{theorem}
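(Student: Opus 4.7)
My plan is to mirror the proof of the real-case Theorem~\ref{thm:main_rank_k} line by line, changing only the anti-concentration input at the one step where the underlying field matters. By unitary invariance of the complex Gaussian law (together with $UV^T \stackrel{d}{=} UV^*$, since $\overline{V}$ has the same law as $V$), I may assume $M = \bigl(\begin{smallmatrix} D & 0 \\ 0 & 0 \end{smallmatrix}\bigr)$ with $D \in \C^{(n-k)\times(n-k)}$ a positive diagonal matrix whose smallest entry is $s_{n-k}(M)$. Partition $U = \bigl(\begin{smallmatrix} U_1 \\ U_2 \end{smallmatrix}\bigr)$ and $V = \bigl(\begin{smallmatrix} V_1 \\ V_2 \end{smallmatrix}\bigr)$ with $U_2, V_2 \in \C^{k\times k}$. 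On the almost-sure event that $U_2, V_2$ are invertible, the factorization
\[
M + UV^T = \begin{pmatrix} I & U_1 U_2^{-1} \\ 0 & I \end{pmatrix} \begin{pmatrix} D & 0 \\ 0 & U_2 V_2^T \end{pmatrix} \begin{pmatrix} I & 0 \\ V_2^{-T} V_1^T & I \end{pmatrix}
\]
reduces the problem to controlling the smallest singular value of the middle block alongside the operator norms of the two unit-triangular factors.

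The decisive step is the anti-concentration of $s_k$ of a $k \times k$ Gaussian block. In the real case one uses the Edelman-type bound $\p(s_k(G_\R) \le t) \le C\sqrt{k}\,t$, whose linear-in-$t$ decay produces the $\sqrt{\varepsilon}$ rate. In the complex case the density of $s_k$ of a complex Ginibre matrix $G_\C$ with entries $\mathcal{CN}(0,1)$ \emph{vanishes at the origin}, giving the sharper estimate $\p(s_k(G_\C) \le t) \le C k\, t^2$ (equivalently, $k\,s_k(G_\C)^2$ is stochastically close to an $\mathrm{Exp}(1)$ variable). Substituting this quadratic-in-$t$ estimate into the real-case chain of inequalities is the single nontrivial change; it upgrades the final probability from $C\sqrt{\varepsilon}$ to $C\varepsilon$.

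The remaining estimates are routine complex analogues of the real case: $\|U\|,\|V\| = O(\sqrt n)$ and $s_k(U_2), s_k(V_2) = \Omega(1/\sqrt k)$ each hold except on an $e^{-cn}$-probability event, so $\|U_1 U_2^{-1}\|, \|V_1 V_2^{-1}\| = O(\sqrt{nk})$ and the product of the triangular-factor norms is $O(nk)$ -- matching the $1/(nk)$ in the threshold of the theorem. Assembling these with the quadratic anti-concentration and the hypothesis $s_{n-k}(M)<n$ yields the claimed bound. The main obstacle, and where I expect to spend the most care, is the bookkeeping: verifying that the quadratic rate propagates through the product $s_k(U_2)\,s_k(V_2)$ -- for instance via $\p(s_k(U_2)\,s_k(V_2)\le \tau)\le 2\,\p(\min\{s_k(U_2),s_k(V_2)\}\le\sqrt{\tau}) \le C k \tau$, or more directly through $s_k(U_2 V_2^T)$ -- without accidentally reintroducing a square root, and that the threshold $\varepsilon\, s_{n-k}(M)/(nk)$ is matched line by line against the chain of bounds so that no extra factor of $n$ or $k$ survives on the probability side of the final inequality.
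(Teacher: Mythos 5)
Your overall strategy mirrors the paper's: reduce to diagonal $M$ by unitary invariance, block-factorize around the $k\times k$ corner $U_2V_2^T$, observe that the Schur complement collapses to $D$, and replace the real anti-concentration $\p(s_k(G_\R)\le t)\lesssim\sqrt k\,t$ by the complex one $\p(s_k(G_\C)\le t)\lesssim k\,t^2$. This is exactly the plan of Theorems~\ref{thm:main_rank_k} and \ref{thm:main_rank_k_complex}, and your factorization is precisely the block LDU decomposition of $M+UV^T$ — its block-diagonal middle factor $\mathrm{diag}(D, U_2V_2^T)$ is the paper's observation that the Schur complement of $U_{[k]}V_{[k]}^T$ equals $M'$.

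The gap is the step ``reduces the problem to controlling the smallest singular value of the middle block alongside the operator norms of the two unit-triangular factors.'' Implicitly you are using $\|(L\Delta R)^{-1}\|\le\|L^{-1}\|\,\|\Delta^{-1}\|\,\|R^{-1}\|$, which applies the triangular-factor loss $\bigl(1+\|U_1U_2^{-1}\|\bigr)\bigl(1+\|V_1V_2^{-T}\|\bigr)$ to \emph{both} $s_{n-k}(M)^{-1}$ and $s_k(U_2V_2^T)^{-1}$. After conditioning $s_k(U_2),s_k(V_2)\gtrsim\sqrt\eps/\sqrt k$ (which costs probability $O(\eps)$, not $e^{-cn}$ — your claim that this holds except with exponentially small probability is false for a \emph{square} $k\times k$ Gaussian block; only the rectangular blocks $U_1,V_1$ enjoy exponential tails), you have $s_k(U_2V_2^T)\gtrsim\eps/k$ and the triangular-factor product is $\Theta(nk/\eps)$, so your bound gives $s_n(M+UV^T)\gtrsim\min\bigl(s_{n-k}(M),\eps/k\bigr)\cdot\eps/(nk)$. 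Whenever $s_{n-k}(M)>\eps/k$ (take $k=1$, $s_{n-1}(M)=1$: you get $\eps^2/n$ instead of the needed $\eps/n$) this is a factor of $\eps$ short of the theorem's threshold $\eps\,s_{n-k}(M)/(nk)$. The repair is exactly the paper's Lemma~\ref{lem:op_norm_of_block_inverse}: from your own factorization, write $(M+UV^T)^{-1}=R^{-1}\Delta^{-1}L^{-1}$ explicitly as a $2\times2$ block matrix and bound its norm by the \emph{sum} of the block norms; then $\|(U_2V_2^T)^{-1}\|$ enters as an additive term not multiplied by the triangular-factor norms, and the remaining bookkeeping closes.
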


 A further natural question to consider is if the low-rank noise model can be studied in other problems in smoothed analysis. In Section \ref{sec:challenges}, we highlight the challenges that arise when applying low-rank random perturbations to other well studied problems in smoothed analysis such as the simplex method and $k$-means clustering. We show that current analysis methods that work for dense random perturbations for these problems do not carry over to the low-rank case due to the lack of independence.

Lastly, we note that Theorem \ref{thm:rank_k_update_simp} requires that if the input matrix $M$ has rank $n-k$, then perturbation has rank exactly $k$. This condition can be relaxed in a couple of ways; first, if we add a perturbation of rank less than $k$ then the matrix will be singular so there is nothing to study in this case. On the other hand, adding a rank $k' > k$ perturbation to a rank $n-k$ matrix can be thought of as adding a rank $(k'-k)$ perturbation to a full rank matrix since the original matrix plus a rank $k$ perturbation will be full rank with probability $1$. Then as explained further in Section \ref{sec:beyond}, we can obtain the following result in this case.

\begin{theorem}[Theorem \ref{thm:beyond_k} simplified]\label{thm:beyond_k_simp}
Let $M$ be a $n \times n$ real matrix with $\text{rank}(M) = n$, smallest singular value $s_n$, and $U,V \in \mathbb{R}^{n \times k}$ have independent $\mathcal{N}(0,1)$ coordinates. Then for all $\eps \in (0,1)$,
\[
	\p\left(s_n(M+UV^T)\le\frac\eps{\sqrt n}\right)\le
	C\left(
	\sqrt\eps\left(
	1+\frac{2nk}{s_n}
	\right)
	+
	\frac1{(2nk)^{9/4}}
	+
	\exp(-c nk)
	\right).
\]
\end{theorem}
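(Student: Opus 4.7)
The plan is to reduce the full-rank case of Theorem~\ref{thm:beyond_k_simp} to the rank-$(n-k)$ setting of Theorem~\ref{thm:rank_k_update_simp}, via the Sherman--Morrison--Woodbury identity together with an anti-concentration estimate for a $k\times k$ random matrix. Since $M$ is invertible, Woodbury gives
\[
(M+UV^T)^{-1}=M^{-1}-M^{-1}U\bigl(I_k+V^TM^{-1}U\bigr)^{-1}V^TM^{-1},
\]
and taking operator norms yields the key estimate
\[
\frac{1}{s_n(M+UV^T)}\ \le\ \frac{1}{s_n}\ +\ \frac{\|U\|_{\mathrm{op}}\,\|V\|_{\mathrm{op}}}{s_n^2\cdot s_k\!\bigl(I_k+V^TM^{-1}U\bigr)}.
\]
It therefore suffices to (i) control the operator norms $\|U\|_{\mathrm{op}},\|V\|_{\mathrm{op}}$ by Gaussian concentration and (ii) lower-bound the smallest singular value $\tau:=s_k(I_k+V^TM^{-1}U)$ of a $k\times k$ random matrix.

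For (ii), rotational invariance of the Gaussian distribution lets me take $M^{-1}=\Lambda:=\mathrm{diag}(1/s_n,\ldots,1/s_1)$. Conditional on $U$, the $k\times k$ matrix $V^T\Lambda U$ is Gaussian with iid rows of covariance $\Sigma_U:=U^T\Lambda^2U$, so
\[
I_k+V^TM^{-1}U\ \stackrel{d}{=}\ \bigl(\Sigma_U^{-1/2}+G\bigr)\Sigma_U^{1/2}
\]
for a standard $k\times k$ Gaussian $G$ independent of $U$. Applying Theorem~\ref{thm:rank_k_update_simp} at the $k\times k$ scale to $\Sigma_U^{-1/2}+G$ --- viewed as a rank-$k$ Gaussian perturbation of a fixed matrix --- transfers its $\sqrt{\cdot}$ anti-concentration bound to $\tau$, with an overall scale factor $\sqrt{s_k(\Sigma_U)}$. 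A matrix Chernoff/Bernstein argument, using that the largest weight in $\Sigma_U=\sum_{l=1}^n \lambda_l^2\,u_l u_l^T$ is $\lambda_n^2=1/s_n^2$, gives $s_k(\Sigma_U)\gtrsim 1/s_n^2$ except on an event of probability $\exp(-cnk)$; this produces the $\exp(-cnk)$ term in the final bound.

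Plugging these estimates into the Woodbury inequality and choosing the truncation threshold $\|U\|_{\mathrm{op}}\|V\|_{\mathrm{op}}\le 2nk$ produces the main term $\sqrt\eps\,(1+2nk/s_n)$: the $1$ comes from the $1/s_n$ contribution in the Woodbury bound and the $2nk/s_n$ tracks the relative size of the perturbation against $M$; the $(2nk)^{-9/4}$ correction is the optimized tail probability of this truncation balanced against the strength of the resulting interior bound. The main obstacle is to carry out (ii) with the correct $\sqrt\eps$ rate --- a direct use of Theorem~\ref{thm:ST} on the $k\times k$ matrix yields only a linear $\eps$ bound, and matching $\sqrt\eps$ requires either the recursive application of Theorem~\ref{thm:rank_k_update_simp} at the $k\times k$ scale just described or a finer small-ball analysis that exploits the bilinear Gaussian structure of $V^TM^{-1}U$; the interplay between this anti-concentration bound and the above truncation threshold is what pins down the specific exponent $9/4$ in the correction.
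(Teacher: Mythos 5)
Your starting point is reasonable: the Woodbury identity $(M+UV^T)^{-1}=M^{-1}-M^{-1}U(I_k+V^TM^{-1}U)^{-1}V^TM^{-1}$ does reduce the problem to lower-bounding $\tau=s_k(I_k+V^TM^{-1}U)$, and after diagonalizing $M^{-1}=\Lambda$ the identity $V^T\Lambda U\stackrel{d}{=}G\,\Sigma_U^{1/2}$ with $\Sigma_U=U^T\Lambda^2 U$ and $G$ a standard $k\times k$ Gaussian is correct. But the two key claims in step (ii) are both wrong. First, the ``recursive'' application of Theorem~\ref{thm:rank_k_update_simp} to $\Sigma_U^{-1/2}+G$ is not available: $G$ is a \emph{dense} $k\times k$ Gaussian, not a product $U'V'^T$ of thin Gaussian factors, and $\Sigma_U^{-1/2}$ is full rank, so neither hypothesis of that theorem applies. (Your worry that Theorem~\ref{thm:ST} ``only'' gives linear-in-$\eps$ is also misplaced: for $\eps<1$, linear is \emph{stronger} than $\sqrt\eps$, so if the remaining steps closed you would simply have a better bound than the paper's.) Second, and more damaging, the bound $s_k(\Sigma_U)\gtrsim 1/s_n^2$ is false. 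For a unit $v\in\R^k$, $v^T\Sigma_U v=\sum_\ell\lambda_\ell^2(Uv)_\ell^2$, which is controlled from below by the \emph{smallest} diagonal weight $\min_\ell\lambda_\ell^2=1/s_1(M)^2$, not the largest $1/s_n(M)^2$. When $M$ is ill-conditioned, only one term in $\sum_\ell\lambda_\ell^2 u_\ell u_\ell^T$ is large and $\Sigma_U$ is nearly rank one, so $s_k(\Sigma_U)$ collapses to the $1/s_1^2$ scale. Fixing this makes your final estimate depend on $s_1(M)$, which Theorem~\ref{thm:beyond_k_simp} deliberately avoids.

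The paper's proof takes a different route altogether: it adapts the Sankar--Spielman--Teng geometric argument rather than an algebraic resolvent identity. One bounds $\|A^{-1}y\|$ for a fixed unit vector $y$ by rotating and reducing to $1/|\langle w,r^n\rangle|$, where $w$ is the unit normal to the span of the first $n-1$ rows (Lemma~\ref{lem:single_vec}); the crux is Lemma~\ref{lem:gaussian}, a conditioning argument on three events --- an entrywise bound on $U$, a lower bound on $\|V^Ty\|$, and a spectral-norm bound on $V$ --- under which one extracts a deterministic lower bound on $\|V^Tw\|$ and hence Gaussian anti-concentration of $\langle w,Vu^n\rangle$. The $(2nk)^{-9/4}$ term is the tail of the entrywise truncation with $x_1=3\sqrt{\log(2nk)}$, the $\exp(-cnk)$ term is the operator-norm tail of $V$, and the $\sqrt\eps$ comes from balancing $\eps/x_2$ against $x_2^k$ by choosing $x_2=\sqrt\eps$. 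The single-vector bound is then upgraded to a bound on $\|A^{-1}\|$ by averaging over a uniformly random direction. None of this structure is present in your proposal, and I do not see how the Woodbury route can be repaired to recover the stated theorem without letting $s_1(M)$ enter.
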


The second way to circumvent Theorem \ref{thm:rank_k_update_simp} is with the use of Weyl’s perturbation inequality.
To see how it applies, consider the case of $k=1$.  Decompose $M=s_n(M)\ell_nr_n^T+M'$ where $\ell_n$, $r_n$ are the left and right singular vectors associated with $s_n(M)$.  Then we can view $M+uv^T$ as a random perturbation of $M'$ (which has rank $n-1$), plus matrix $s_n(M)\ell_nr_n^T$ whose operator norm is at most $s_n$. We can then apply Theorem \ref{thm:rank_k_update_simp} to $M'$ to bound $s_n(M'+UV^T)$ in terms of $s_{n-1}(M')=s_{n-1}(M)$, and then incur an additional additive $s_n(M)$ error by Weyl's inequality.
Since our ideal use case is when $s_n(M)$ is already negligible, the final bound that we get is comparable to the bound from Theorem  \ref{thm:rank_k_update_simp}. 

Finally in Section \ref{sec:systems}, we discuss an application of low-rank perturbations to solving large sparse linear systems and in Section \ref{sec:numerical}, we present numerical evidence for our low-rank error model.

\begin{remark}
Note that Theorem \ref{thm:ST} and the works of Tao and Vu in \cite{TaoVuCondition1, TaoVuCondition2} both prove a statement of the form $\p(s_n(M+E) \le n^{-A}) \le n^{-B}$ where $E$ is the perturbation and $A,B$ are parameters that depend on the random variables comprising the perturbation. Our statements in Theorem \ref{thm:rank_k_update_simp} is also of similar flavor since it shows that $\p(s_n(M+E) \le s_{n-k}(M) \, n^{-A}) \le n^{-B}$. Since the theorems of Sankar, Spielman, and Teng and Tao and Vu have found other applications in smoothed analysis, such as in the analysis of the simplex method and beyond, we envision that our theorem could also find similar applications. We discuss barriers in applying the low-rank noise model to other smoothed analysis problems in Section \ref{sec:challenges}.
\end{remark}

\subsection{Previous techniques and our approach}\label{sec:previous}
In summary, it is difficult to apply previous techniques in our case since we have \emph{shared randomness} across different rows/columns of the matrix. In more detail, all of the previous techniques used to bound the singular values of a random matrix rely on the controlling the distance between a row to the span of the other rows. To see why this is relevant, imagine a singular matrix. In such a case, it is clear that there must exist a row that lies in the span of the other rows. Therefore, controlling the distance from a row to the span of the other rows gives control over the smallest singular value.

Controlling this geometric quantity boils down to understanding the dot product between a row and the normal vector of the hyperplane spanned by the other rows. Crucially if the rows are independent, we can treat the normal vector of the hyperplane as fixed so this question reduces to the well known Erdos-Littlewood-Offord anti-concentration inequality and its variants which are used in previous works such as \cite{tao2009random, TaoVuCondition1, TaoVuCondition2}. 

To be more precise, lets consider a high level overview of the proof of Sankar, Spielman, and Teng's Theorem \ref{thm:ST}. Fix a vector $x$ and note that from the identity $s_n(M_n) = \|M_n^{-1}\|$, it suffices to give a tail bound on $\|M_n^{-1}x\|$. By applying an orthogonal rotation and using the rotational invariance of the Gaussian, we can say that 
\[\|M_n^{-1}x\| = \|M_n^{-1}e_1\| = \|c_1\|\]
where $e_1$ is the first basis vector and $c_1$ is the first column of $M_n^{-1}$. From the equation $M_n \cdot M_n^{-1} = I$, it follows that $\|c_1\| = 1/|w^Tr_1|$ where $r_1$ is the first row of $M_n$ and $w^T$ is the normal vector of the span of the rows $r_2, \cdots, r_n$. Therefore, the proof reduces to understanding the dot product between a random vector $r_1$, and another independent vector $w$. In the more general case of Tao and Vu \cite{TaoVuCondition1, TaoVuCondition2}, more elaborate dot product estimates using the Erdos-Littlewood-Offord inequality are needed.

In our case, if we add a rank $1$ perturbation to a matrix, randomness is shared across \emph{all} rows. Therefore, we cannot reduce our problem to understanding the dot product between a random vector and another independent vector since fixing a normal hyperplane of a span of a subset of rows automatically gives information about the rows not considered in the span due to the shared randomness. Hence, it is tricky to apply the spectrum of existing techniques in our case.

To overcome these barriers, we use a completely different method to prove Theorem \ref{thm:main_rank_k}. We first reduce our problem to adding noise to a diagonal matrix by using rotational invariance. Then we employ linear algebraic tools (rather than probabilistic tools), to get an `explicit' representation of the inverse of a matrix after adding rank $k$ noise. After arriving at an explicit representation of the inverse, we are able to compute a probabilistic bound on the smallest singular value. Our proof crucially uses the fact that our low-rank perturbations have Gaussian entries whereas the proofs of the dense perturbations carry over in other distributional settings as well. This is not a flaw of our method since it is simply not possible to prove an analogue of our theorems even if the entries of the low-rank perturbations come from sub-Gaussian distributions. We elaborate this point further in Section \ref{sec:subgauss}. 

For Theorem \ref{thm:beyond_k} where we add a rank $k$ random noise to a rank $n$ matrix, we carefully adapt the geometric ideas utilized in previous approaches as explained above. However to get around the shared randomness between rows, we have to perform some careful conditioning which allows us better control the behavior of the random normal vector $w$. 


\subsection{Why would the perturbed matrix even be full rank?}
We briefly address the question of why we even expect low-rank perturbations to improve the condition number. Consider the case where $D$ is a diagonal matrix of rank $n-1$ and we add a random rank $1$ Gaussian perturbation $uv^T$. Recall the matrix determinant lemma which states that 
\[\det(D+uv^T) = \det(D) + v^T \text{adj}(D) u\]
where $\text{adj}(D)$ is the adjugate matrix of $D$. In our case, we can assume that the first $n-1$ entries on the diagonal of $D$ are given by $s_1(D), \cdots, s_{n-1}(D)$ while the last entry is $0$. Then, the adjugate matrix is the all zeros matrix except the bottom rightmost entry which is $s_1(D)\cdots s_{n-1}(D)$. Therefore, 
\[\det(D)+ v^T \text{adj}(D) u = (u_nv_n)(s_1(D)\cdots s_{n-1}(D)) \] 
which is non-zero with probability $1$ since $u_nv_n \ne 0$ with probability $1$. Thus, adding a random rank $1$ perturbation results in $D$ not being singular which motivates the question of studying the smallest singular value after a random rank $1$ (and more generally low-rank) perturbation.

\subsection{Related works}
The smoothed analysis framework has been applied to a variety of problems, most notably in analyzing optimization problems such as $k$-means \cite{smoothedkmeans_orig, smoothedkmeans_best}, the perceptron algorithm \cite{perceptron}, and the simplex method \cite{smoothed_orig, friendlysmooth}. In all of these results, the goal is to show that after an input instance of the problem is suitably perturbed, the algorithm or heuristic runs in polynomial time (the time may depend on the properties of the noise). For a survey of results, see \cite{survey1, survey2, survey3} and references within. The analysis used tends to be very problem specific and also heavily dependent on the type of noise added which for a vast majority of cases are dense Gaussian noise.

\paragraph{Zero preserving noise.} The work that is closest in spirit to our work is the zero preserving noise model studied by Spielman, and Teng. It was shown in \cite{sankarST} that if $M$ is a \emph{symmetric} matrix, then adding an independent Gaussian random variable $x_{ij}$ to each entry $M_{ij}$ such that $i\ne j$, $M_{ij} \ne 0$, and satisfying $x_{ij} = x_{ji}$ along with a Gaussian perturbation along the diagonal results in a `well behaved' condition number. However, the main drawback of this result is that it only holds for symmetric matrices and even in this case, a dense perturbation is required if $M$ is dense to begin with.

\paragraph{Other works.} There are works that use sparse Gaussian perturbations, i.e, their perturbation model is a Gaussian times a Bernoulli random variable with a small parameter. If the Bernoulli random variable has sufficiently small parameter, then with high probability, most of the entries in the perturbation will be zero. The downsides of these methods are that many random variables still need to be drawn and it is not clear if they can show the resulting matrix is well conditioned. For example, Theorem $3.6$ in \cite{reviewer} only shows that the singular values of the resulting matrix are \emph{separated} from each other, not that the matrix is well conditioned. In fact, the study of these types of random matrices where entries are sub-Gaussian random variables multiplied by independent Bernoulli variables is still lacking. For instance, the smallest singular value of such family of random matrices was only recently resolved in the highly technical paper of Basak and Rudelson \cite{BASAK2017426}.

\subsection{Notation}
We use capital letters as $A,M$ to denote matrices and lower case letters such as $x$ for vectors. For a vector $x$, the norm $\|x\|$ is always the Euclidean norm whereas for a matrix $A$, the norm $\|A\|$ always refers to the operator norm (the largest singular value). For a matrix $A$, let $A_S$ denote the sub-matrix of $A$ which includes the $i$th row of $A$ if and only if $i\in S$. The relation $a \lesssim b$ denotes that $a$ is less than or equal to $b$ up to some fixed positive constant and similarly, $a \simeq b$ denotes that $a$ and $b$ are equal up to some fixed positive constant. Unless otherwise indicated, variables $C, c, C_1, C_2, \cdots$ denote positive constants.

\section{Preliminaries}
In this section we enumerate some useful results. First, we recall a classical estimate of the operator norm of a random matrix of Seginer \cite{seginer}. The following proposition essentially shows that the top singular value of a random matrix is well behaved under mild assumptions. Alternatively, one can also bound the top singular value by the frobenius norm if the random variables populating the matrix have sufficient tail concentration.

\begin{proposition}\label{prop:s1}
Let $M$ be a random $n \times n$ matrix with entries $m_{ij}$. Then,
\[\E\|M\| = O\left( \E \max_{1 \le i \le n} \sqrt{\sum_{j=1}^n m_{ij}^2} + \E \max_{1 \le j \le n} \sqrt{\sum_{i=1}^n m_{ij}^2} \right).\]
\end{proposition}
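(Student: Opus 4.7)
The plan is to establish both directions. The right-hand side automatically lower bounds $\E\|M\|$: for each standard basis vector $e_j$, we have $\|M\|\ge\|Me_j\|=(\sum_i m_{ij}^2)^{1/2}$, so $\|M\|\ge\max_j\|\mathrm{col}_j(M)\|$, and applying the same to $M^T$ yields $\|M\|\ge\max_i\|\mathrm{row}_i(M)\|$. Taking expectations shows the RHS is of the correct order; the real work is the upper bound.

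For the upper bound, I would first symmetrize: comparing $M$ with $M-M'$ for an independent copy $M'$ loses only a factor of $2$ in expectation, and the resulting matrix has symmetric entries. Conditional on the moduli $|m_{ij}-m'_{ij}|$, the Rademacher contraction principle reduces the problem to bounding
\[
\E\,\sup_{x,y\in S^{n-1}}\sum_{i,j}\varepsilon_{ij}\,|m_{ij}-m'_{ij}|\,x_iy_j,
\]
where $\varepsilon_{ij}$ are independent Rademacher signs.

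To bound this Rademacher bilinear process, decompose the unit sphere $S^{n-1}$ dyadically by coordinate magnitude, writing each unit vector as a superposition of blocks supported on level sets of the form $\{i:|x_i|\simeq 2^{-s}\}$. The contribution of the peak blocks (largest coordinates, of which there can be only a few by $\ell_2$-normalization) to the bilinear form is bounded directly in terms of $\max_i\|\mathrm{row}_i(M)\|$ and $\max_j\|\mathrm{col}_j(M)\|$, while the contribution from smaller blocks is controlled via Khintchine's inequality combined with a careful entropy estimate for nets on the relevant level sets, so that summing the resulting estimates over dyadic scales gives a geometric series in the scale parameter.

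The main obstacle will be organizing the chaining carefully enough that the off-diagonal interactions between levels $s$ and $t$ contribute at most a constant (not a logarithmic) factor in $n$; a cruder Slepian/Dudley-style bound is insufficient for the sharp constant claimed, which is precisely why Seginer's original argument is needed in place of a textbook entropy integral.
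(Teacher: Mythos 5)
The paper offers no proof of this proposition --- it is recalled as a known result of Seginer \cite{seginer} --- so there is no in-paper argument to compare yours against. Evaluating your sketch on its own terms: you correctly identify the shape of Seginer's argument (symmetrize, condition on moduli to reduce to a Rademacher bilinear form, dyadically decompose the sphere by coordinate magnitude, control the peak levels directly via row and column norms, and sum a geometric series over the remaining levels). But you explicitly leave the central bucketing/chaining estimate as ``the main obstacle,'' and that estimate is where essentially all the substance of the theorem lives; as written this is a roadmap to Seginer's proof, not a proof.

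Two further points. First, your symmetrization step and the subsequent factoring $m_{ij}-m'_{ij}=\varepsilon_{ij}\,|m_{ij}-m'_{ij}|$ with independent signs require the entries $m_{ij}$ to be mutually independent, and Seginer's theorem in fact assumes i.i.d.\ entries. The proposition as transcribed in the paper omits these hypotheses, but they are essential --- without independence the claim is simply false (take $m_{ij}=x$ for a single scalar random variable $x$, so $\E\|M\| = n\,\E|x|$ while the right-hand side is $O(\sqrt n\,\E|x|)$) --- and your argument uses them tacitly, so you should state them explicitly. Second, a terminological note: writing $m_{ij}-m'_{ij}=\varepsilon_{ij}\,|m_{ij}-m'_{ij}|$ after symmetrization is just the defining representation of a symmetric random variable; it is not an instance of the Rademacher contraction principle, which is a separate comparison tool and not what you are invoking there.
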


Next we establish tail bounds for the smallest and largest singular values of real and complex Gaussian matrices.

\begin{lemma}[Theorem \ref{thm:ST} reformulated.]
\label{lem:tail_sk}
Let $G \in \mathbb{R}^{k \times k}$ with all entries chosen i.i.d.\ from $\N(0,1)$. Then
    \[\p\left(s_k(G)\le t_1/\sqrt{k}\right)<Ct_1.\]
for some absolute constant $C$.
\end{lemma}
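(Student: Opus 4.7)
The plan is to obtain the lemma as an immediate corollary of Theorem \ref{thm:ST}. The statement is essentially just a rescaling of that theorem, with the free matrix $M$ taken to be zero.

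First, I would apply Theorem \ref{thm:ST} with $n$ replaced by $k$ and with the arbitrary base matrix $M$ chosen to be the zero matrix in $\mathbb{R}^{k\times k}$. Then $M + N_k = G$, where $G$ has iid $\mathcal{N}(0,1)$ entries, and Theorem \ref{thm:ST} directly yields
\[
    \p\left(s_k(G) \le t\right) \le C k^{1/2} t
\]
for every $t > 0$ and some absolute constant $C > 0$.

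Next, I would perform the simple change of variables $t = t_1/\sqrt{k}$. Substituting,
\[
    \p\left(s_k(G) \le t_1/\sqrt{k}\right) \le C k^{1/2}\cdot \frac{t_1}{\sqrt{k}} = C t_1,
\]
which is exactly the desired bound. No additional obstacles arise, since the claim is literally a reparametrization of Theorem \ref{thm:ST} in the case of a zero base matrix. The only thing to verify is that the constant $C$ here is the same absolute constant (up to renaming) as in Theorem \ref{thm:ST}, which it is.
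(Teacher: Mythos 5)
Your proposal is correct and is exactly the intended argument: the paper explicitly labels this lemma as ``Theorem \ref{thm:ST} reformulated,'' and the derivation is precisely the specialization $M=0$, $n=k$, followed by the substitution $t = t_1/\sqrt{k}$. Nothing more is needed.
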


\begin{lemma}[Theorem 1.1 in \cite{tao2009random}]
\label{lem:tail_sk_complex}
Let $G \in \mathbb{C}^{k \times k}$ with all entries chosen with i.i.d.\ real and imaginary parts from $\N(0,1/2)$. Then
    \[\p\left(s_k(G)\le t_1/\sqrt{k}\right)<t_1^2.\]
\end{lemma}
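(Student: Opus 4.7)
My plan is to exploit the exact joint density of the singular values of a square complex Ginibre matrix, which is classical. Since $s_k(G)^2$ coincides with the smallest eigenvalue $\lambda_k$ of the Hermitian positive semidefinite matrix $W := G^*G$, and $W$ is a square complex Wishart matrix, the joint density of its eigenvalues $\lambda_1 \ge \cdots \ge \lambda_k \ge 0$ is
\[
p(\lambda_1,\ldots,\lambda_k) \;=\; \frac{1}{Z_k}\prod_{i<j}(\lambda_i-\lambda_j)^2\prod_i e^{-\lambda_i},
\]
a special case of the Laguerre Unitary Ensemble. Thus it suffices to upper bound $\p(\lambda_k \le t)$ for the choice $t := t_1^2/k$.

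The crucial step is a translation trick. To compute the tail $\p(\lambda_k > t)$, I would substitute $\mu_i := \lambda_i - t$. The Vandermonde factor is translation invariant, so $\prod_{i<j}(\lambda_i - \lambda_j)^2 = \prod_{i<j}(\mu_i - \mu_j)^2$, while the exponential factor becomes $\prod_i e^{-(\mu_i+t)} = e^{-kt}\prod_i e^{-\mu_i}$. The new integration region is exactly $\mu_1 \ge \cdots \ge \mu_k \ge 0$, and the remaining integrand (after extracting $e^{-kt}$) is the original joint density, which integrates to $1$. This yields the exact identity
\[
\p(\lambda_k > t) \;=\; e^{-kt}.
\]

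Consequently, $\p(\lambda_k \le t) = 1 - e^{-kt} \le kt$, and substituting $t = t_1^2/k$ gives
\[
\p\!\left(s_k(G) \le \frac{t_1}{\sqrt{k}}\right) \;=\; \p\!\left(\lambda_k \le \frac{t_1^2}{k}\right) \;\le\; t_1^2,
\]
which is the desired bound.

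The main obstacle I anticipate is justifying the exact joint eigenvalue density together with its normalization constant; this is a nontrivial change-of-variables computation, done either via the unitary QR decomposition of $G$ or by invoking the classical Ginibre/James derivation. Once the density is in hand, the translation argument is essentially automatic. The reason the complex case is cleaner than the real case, and yields a bound quadratic rather than linear in $t_1$, is that the Vandermonde exponent in the LUE is $2$ (versus $1$ in the real LOE); equivalently, the squared magnitude of a standard complex Gaussian is $\mathrm{Exp}(1)$ rather than $\chi^2_1$. A purely geometric proof via distances of rows to the span of the remaining rows, analogous to the Sankar--Spielman--Teng argument sketched in Section \ref{sec:previous}, would also go through using that projections of complex Gaussians onto fixed unit vectors are complex Gaussian, but it appears to lose a factor of $k$ and only delivers the weaker bound $kt_1^2$.
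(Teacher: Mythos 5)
Your proof is correct. The paper states this lemma by citing Theorem 1.1 of Tao and Vu as a black box, so your self-contained derivation is a genuinely different route. The key fact you establish --- that $k\,s_k(G)^2$ is an exact $\mathrm{Exp}(1)$ random variable --- is the classical exact law of the smallest eigenvalue of a square complex Wishart matrix (essentially due to Edelman), and your translation trick is a slick way to get it: the Vandermonde factor in the Laguerre Unitary Ensemble density is translation-invariant, so shifting all eigenvalues by $t$ extracts an exact $e^{-kt}$ and leaves behind the original density, which integrates to one; a pleasant side effect is that the normalization constant $Z_k$ never needs to be evaluated. This yields the strict inequality $\p(\lambda_k\le t)=1-e^{-kt}<kt$ for $t>0$, so the paper's strict bound is recovered. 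Your worry about justifying the joint eigenvalue density is fair but beside the point for this argument --- it is a standard fact (James, or via the SVD change of variables), and the two structural ingredients you correctly identify as essential are (i) that the square case $m=k$ kills the $\lambda_i^{m-k}$ factor, making the density translation-covariant, and (ii) that the Vandermonde exponent $\beta=2$ in the unitary (complex) ensemble, versus $\beta=1$ in the orthogonal (real) one, is precisely why the probability scales as $t_1^2$ rather than $t_1$.
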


\begin{lemma}[Proposition $2.3$ in \cite{RV}]
\label{lem:tail_s1}
Let $G \in \mathbb{R}^{(n-k)\times k}$ for $k\le n/2$ with all entries chosen i.i.d.\ from $\N(0,1)$. Then 
    \[\p\left(s_1(G)\ge t_2\sqrt{n-k}\right)<C_1e^{-C_2\,t_2^2\,n}.\]
for $t_2$ larger than some absolute constant, and $C_1,C_2$ absolute constants.
\end{lemma}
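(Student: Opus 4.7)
The plan is to apply a standard $\varepsilon$-net argument combined with Gaussian concentration of bilinear forms, since this is a textbook tail bound for the largest singular value of a rectangular Gaussian matrix.

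First I would rewrite the operator norm as a bilinear supremum, $s_1(G)=\sup_{x\in S^{k-1},\, y\in S^{n-k-1}} \inr{y}{Gx}$, where $S^{d-1}$ is the Euclidean unit sphere in $\mathbb{R}^d$. The idea is then to discretize both spheres: by the standard volume bound, there exist $\varepsilon$-nets $\mathcal{N}_x\subset S^{k-1}$ and $\mathcal{N}_y\subset S^{n-k-1}$ with $|\mathcal{N}_x|\le (3/\varepsilon)^k$ and $|\mathcal{N}_y|\le (3/\varepsilon)^{n-k}$. A standard approximation argument (rounding $x$ and $y$ to their nearest net points and bounding the remainder using the supremum itself) then yields
\[
s_1(G)\le \frac{1}{(1-\varepsilon)^2}\max_{x\in\mathcal{N}_x,\,y\in\mathcal{N}_y}\inr{y}{Gx}.
\]
I will fix $\varepsilon=1/4$, so the combined net has size at most $12^{\,n}$ and the prefactor is at most $16/9$.

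Next, for each fixed pair of unit vectors $x,y$, the bilinear form $\inr{y}{Gx}=\sum_{i,j}y_i G_{ij}x_j$ is a linear combination of independent $\N(0,1)$ entries with coefficient vector of Euclidean norm $\|y\|\,\|x\|=1$, hence $\inr{y}{Gx}\sim\N(0,1)$ and $\p(\inr{y}{Gx}\ge s)\le e^{-s^2/2}$. A union bound over the net of size at most $12^n$ gives
\[
\p\!\left(s_1(G)\ge \tfrac{16}{9}\,s\right)\le 12^{\,n}\exp(-s^2/2).
\]
Setting $s=\tfrac{9}{16}\,t_2\sqrt{n-k}$ and using $n-k\ge n/2$ (from $k\le n/2$) yields
\[
\p\!\left(s_1(G)\ge t_2\sqrt{n-k}\right)\le \exp\!\left(n\log 12-c\,t_2^2\,n\right)
\]
for some absolute constant $c>0$. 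Choosing $t_2$ larger than a fixed threshold $t_0$ (so that $c\,t_0^2>2\log 12$) makes the exponent at most $-C_2\,t_2^2\,n$ for an absolute constant $C_2$, giving the claimed bound with some absolute $C_1$.

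The main obstacle, such as it is, is merely bookkeeping: picking $\varepsilon$ and the constant $t_0$ so that the net-cardinality factor $12^{\,n}$ is absorbed by the Gaussian tail $e^{-s^2/2}$, and using $n-k\ge n/2$ to replace $n-k$ by $n$ in the final exponent. No genuinely new probabilistic input beyond scalar Gaussian concentration and covering-number estimates for the sphere is required; indeed, this is exactly the content of the cited Proposition~2.3 of \cite{RV}.
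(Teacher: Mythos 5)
Your proof is correct. The paper does not supply its own proof of this lemma (it simply cites Proposition~2.3 of Rudelson--Vershynin \cite{RV}), and your $\varepsilon$-net argument is the standard one behind that reference: the approximation factor $1/(1-\varepsilon)^2$, the net cardinality $12^{n}$ at $\varepsilon=1/4$, the Gaussian tail for the bilinear form $\inr{y}{Gx}\sim\N(0,1)$, the use of $k\le n/2$ to replace $n-k$ by $n$ up to a factor of two in the exponent, and the choice of threshold $t_0$ to absorb the entropy term $n\log 12$ all check out.
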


\begin{lemma}
\label{lem:tail_s1_complex}
Let $G \in \mathbb{C}^{(n-k)\times k}$ for $k\le n/2$ with all entries chosen with i.i.d.\ real and imaginary parts from $\N(0,1/2)$. Then 
    \[\p\left(s_1(G)\ge t_2\sqrt{n-k}\right)<2C_1e^{-C_2\,t_2^2\,n}.\]
for $t_2,C_1,C_2$ as in Lemma \ref{lem:tail_s1}.
\end{lemma}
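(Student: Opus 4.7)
The plan is to reduce Lemma \ref{lem:tail_s1_complex} to the real case Lemma \ref{lem:tail_s1} by splitting $G$ into its real and imaginary parts and controlling each separately by a triangle inequality.

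First I would write $G=A+iB$ where $A,B\in\R^{(n-k)\times k}$ have i.i.d.\ $\N(0,1/2)$ entries and are independent of each other. The key observation is that the complex operator norm is bounded by the sum of the real operator norms: for any $x=u+iv\in\C^k$ with $u,v\in\R^k$, one has
\[
    \|Ax\|^2=\|Au\|^2+\|Av\|^2\le s_1(A)^2(\|u\|^2+\|v\|^2)=s_1(A)^2\|x\|^2,
\]
and the analogous inequality for $B$. Since $\|Gx\|\le \|Ax\|+\|Bx\|$, it follows that $s_1(G)\le s_1(A)+s_1(B)$.

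Next I would rescale: set $A'=\sqrt 2\,A$ and $B'=\sqrt 2\,B$, so that $A',B'$ have i.i.d.\ $\N(0,1)$ entries and $s_1(A)=s_1(A')/\sqrt 2$ (similarly for $B$). The event $\{s_1(G)\ge t_2\sqrt{n-k}\}$ forces either $s_1(A)\ge \tfrac12 t_2\sqrt{n-k}$ or $s_1(B)\ge \tfrac12 t_2\sqrt{n-k}$, equivalently either $s_1(A')\ge (t_2/\sqrt 2)\sqrt{n-k}$ or $s_1(B')\ge (t_2/\sqrt 2)\sqrt{n-k}$. Applying Lemma \ref{lem:tail_s1} to each of $A'$ and $B'$ with the parameter $t_2/\sqrt 2$ (valid once $t_2$ exceeds $\sqrt 2$ times the absolute constant in Lemma \ref{lem:tail_s1}) and taking a union bound yields
\[
    \p\!\left(s_1(G)\ge t_2\sqrt{n-k}\right)\le 2C_1 e^{-C_2 t_2^2 n/2},
\]
which, after absorbing the factor $1/2$ into the absolute constant $C_2$, matches the target bound.

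There is no substantive obstacle here — the lemma is a direct corollary of Lemma \ref{lem:tail_s1} plus a triangle inequality and a rescaling. The only mild care needed is in the operator-norm step, where one must verify that a real matrix acting on a complex vector still satisfies $\|Ax\|\le s_1(A)\|x\|$ (which follows from splitting $x$ into real and imaginary components as above), and in tracking how the factor of $1/2$ in the variance propagates into the absolute constants.
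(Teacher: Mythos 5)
Your proof follows the same route as the paper's: decompose $G=A+iB$, bound $s_1(G)\le s_1(A)+s_1(B)$, rescale to $\mathcal N(0,1)$ entries, and apply Lemma \ref{lem:tail_s1} together with a union bound. The only cosmetic difference is in where the $\sqrt2$ lands — the paper shifts it into the threshold (working with $t_2\sqrt{2(n-k)}$) so as to keep $C_2$ literally unchanged, whereas you keep the threshold $t_2\sqrt{n-k}$ from the statement and absorb the resulting factor of $1/2$ into $C_2$.
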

\begin{proof}
Decompose $G=A+iB$ and bound $s_1(G)\le s_1(A)+s_1(B)$.  Then
\begin{align*}
    \p\left(s_1(G)\ge t_2\sqrt{2(n-k)}\right)
  &\le\p\left(s_1(A)+s_1(B)\ge t_2\sqrt{2(n-k)}\right)
\\&\le\p\left(s_1(A)\ge\frac{t_2\sqrt{2(n-k)}}{2}\right)+\p\left(s_1(B)\ge\frac{t_2\sqrt{2(n-k)}}{2}\right)
\\&\le\p\left(s_1(\sqrt2A)\ge t_2\sqrt{n-k}\right)+\p\left(s_1(\sqrt2B)\ge t_2\sqrt{n-k}\right)
\\&\le 2C_1e^{-C_2\,t_2^2\,n}
\end{align*}
where the last inequality follows by Lemma \ref{lem:tail_s1} since $\sqrt{2}A$ and $\sqrt2B$ have real i.i.d.\ $\mathcal{N}(0,1)$ entries.
\end{proof}

The following lemma bounds the smallest singular value of a block matrix.

\begin{lemma}
\label{lem:op_norm_of_block_inverse}
Let
\[M=\begin{bmatrix}A&B\\C&D\end{bmatrix}\]
be an $n\times n$ matrix.  Then
\[
s_n(M)^{-1}\le \|A^{-1}\|+\|(M/A)^{-1}\|\left(1+\|A^{-1}B\|\right)\left(1+\|CA^{-1}\|\right)
\]
where $(M/A)=D-CA^{-1}B$ is the Schur complement of $A$.
\end{lemma}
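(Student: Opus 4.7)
The plan is to prove this via the classical block inverse formula plus the triangle inequality, since $s_n(M)^{-1}=\|M^{-1}\|$.

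First, I would recall the block inverse formula (valid whenever $A$ and $(M/A)$ are both invertible):
\[
M^{-1}=\begin{bmatrix}
A^{-1}+A^{-1}B(M/A)^{-1}CA^{-1} & -A^{-1}B(M/A)^{-1}\\
-(M/A)^{-1}CA^{-1} & (M/A)^{-1}
\end{bmatrix}.
\]
The key algebraic move is to notice that this decomposes cleanly as
\[
M^{-1}=\begin{bmatrix}A^{-1}&0\\0&0\end{bmatrix}
+\begin{bmatrix}-A^{-1}B\\ I\end{bmatrix}(M/A)^{-1}\begin{bmatrix}-CA^{-1} & I\end{bmatrix},
\]
which can be verified by simply multiplying out the rank-updated term. (If either $A$ or $(M/A)$ is singular, the bound we want is vacuous since $\|A^{-1}\|$ or $\|(M/A)^{-1}\|$ is infinite, so we may assume both are invertible.)

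Next, I would apply the triangle inequality for the operator norm together with submultiplicativity, giving
\[
\|M^{-1}\|\le\|A^{-1}\|+\left\|\begin{bmatrix}-A^{-1}B\\ I\end{bmatrix}\right\|\cdot\|(M/A)^{-1}\|\cdot\left\|\begin{bmatrix}-CA^{-1} & I\end{bmatrix}\right\|.
\]
Finally I would bound each of the tall/wide block norms by using the identity $\|X\|^2=\|X^TX\|$: the column block satisfies $\|[-A^{-1}B;\,I]\|^2=\|I+(A^{-1}B)^T(A^{-1}B)\|\le 1+\|A^{-1}B\|^2$, so it is at most $\sqrt{1+\|A^{-1}B\|^2}\le 1+\|A^{-1}B\|$, and similarly for the row block. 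Combining these gives exactly the stated inequality.

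There is no real obstacle here: the entire argument is a one-line Schur-complement identity followed by the triangle inequality. The only thing to be careful about is the reformulation as a rank-update (so that the triangle inequality leaves a factor of $\|A^{-1}\|$ alone rather than entangling it with the Schur complement), and the mild step $\sqrt{1+x^2}\le 1+x$ for $x\ge 0$, which is why the bound has additive $1$'s rather than purely multiplicative terms.
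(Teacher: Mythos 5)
Your proof is correct. Both you and the paper start from the Schur block-inverse formula, so the core ingredient is identical. The difference is in how you bound the norm: the paper expands $M^{-1}$ into its four blocks, uses the elementary fact that the operator norm of a block matrix is at most the sum of the operator norms of its blocks, and then factors the resulting four-term sum into $\|A^{-1}\|+\|(M/A)^{-1}\|(1+\|A^{-1}B\|)(1+\|CA^{-1}\|)$. You instead write $M^{-1}$ as a ``diagonal block plus a three-factor product'' and apply submultiplicativity, then bound the tall and wide block norms by $\sqrt{1+\|A^{-1}B\|^2}$ and $\sqrt{1+\|CA^{-1}\|^2}$ before relaxing $\sqrt{1+x^2}\le 1+x$. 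Your route passes through a strictly sharper intermediate bound (geometric rather than arithmetic means of the cross terms), which could be useful if one wanted a tighter version of the lemma, though after the relaxation it lands on exactly the stated inequality. Your aside about the off-diagonal signs (which the paper writes without minus signs) is immaterial since the operator norm is sign-invariant, and your observation that the statement is vacuous if $A$ or $M/A$ is singular correctly handles the edge case that neither you nor the paper dwells on.
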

\begin{proof}
We first use the Schur formula for the inverse of a block matrix:
\[
M^{-1}
=
\begin{bmatrix}
A^{-1}+A^{-1}B(M/A)^{-1}CA^{-1}&A^{-1}B(M/A)^{-1}\\
(M/A)^{-1}CA^{-1}&(M/A)^{-1}
\end{bmatrix}.
\]
The norm of $M^{-1}$ is upper bounded by the sum of the norms of each of its blocks.
\begin{align*}
s_n(M)^{-1}=\|M^{-1}\|
  &\le\|A^{-1}\|+\|A^{-1}B\|\|(M/A)^{-1}\|\|CA^{-1}\|
\\&  +\|A^{-1}B\|\|(M/A)^{-1}\|
\\&  +\|(M/A)^{-1}\|\|CA^{-1}\|
\\&  +\|(M/A)^{-1}\|
\\&  = \|A^{-1}\|+\|(M/A)^{-1}\|\left(1+\|A^{-1}B\|\right)\left(1+\|CA^{-1}\|\right). \qedhere
\end{align*}
\end{proof}

Lastly, we recall that Gaussians are sufficiently anti-concentrated. 
\begin{proposition}\label{prop:gaussian_ball}
Let $x \sim \N(0,1)$. Then,
$\p(|x| \le \eps) = \Theta(\eps)$ for $\eps$ sufficiently small.
\end{proposition}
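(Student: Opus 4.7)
The plan is to bound $\p(|x|\le\eps)$ both from above and from below by constant multiples of $\eps$, by integrating the standard Gaussian density over $[-\eps,\eps]$ and applying elementary pointwise estimates on the integrand.

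First I would write
\[
\p(|x|\le\eps)=\int_{-\eps}^{\eps}\frac{1}{\sqrt{2\pi}}e^{-t^2/2}\,dt,
\]
and then observe that on the interval $[-\eps,\eps]$ the integrand is bounded above by $\frac{1}{\sqrt{2\pi}}$ (its value at the origin) and bounded below by $\frac{1}{\sqrt{2\pi}}e^{-\eps^2/2}$ (its value at the endpoints). Integrating these pointwise bounds yields a matching pair of linear-in-$\eps$ estimates: $\p(|x|\le\eps)\le\frac{2}{\sqrt{2\pi}}\,\eps$ on one side and $\p(|x|\le\eps)\ge\frac{2}{\sqrt{2\pi}}e^{-\eps^2/2}\,\eps$ on the other.

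The upper bound already gives $\p(|x|\le\eps)=O(\eps)$ unconditionally. For the lower bound, the hypothesis that $\eps$ be sufficiently small (say $\eps\le 1$) is used solely to keep the factor $e^{-\eps^2/2}$ uniformly bounded away from zero by a positive absolute constant (e.g.\ by $e^{-1/2}$), which then yields the matching $\Omega(\eps)$ bound and completes the $\Theta(\eps)$ claim. There is no substantive obstacle here: the argument uses nothing beyond the explicit form of the Gaussian density and the monotonicity of $e^{-t^2/2}$ on $[0,\eps]$, so no probabilistic machinery or anti-concentration inequality is invoked.
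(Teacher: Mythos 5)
Your argument is correct and is the standard one: sandwiching the Gaussian density on $[-\eps,\eps]$ between its value at the origin and its value at the endpoints gives the matching $O(\eps)$ and $\Omega(\eps)$ bounds. The paper states Proposition~\ref{prop:gaussian_ball} in the preliminaries as a recalled fact without proof, so there is nothing to compare against; your elementary computation is exactly the justification one would supply.
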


\section{Proof of main theorems}\label{sec:proof}
The goal of this section is to prove the following theorem and its complex and symmetric analogs.
\begin{theorem}
\label{thm:main_rank_k}
    Let $M$ be an arbitrary matrix of rank $n-k\ge n/2$.  Let $U,V$ be $n\times k$ matrices with i.i.d.\ $\mathcal{N}(0,1)$ entries.  Then
    \begin{equation}\label{eq:main}
        \p\left(s_n(M+UV^T)\le \frac{t_1^2}{k}\min\left(\frac1{2},\frac{s_{n-k}(M)}{4\,t_2^2\,(n-k)}\right)\right)\le C_1\,t_1+\,C_2\exp({-C_3\,t_2^2\,n})
    \end{equation}
    for $t_1 \le C_4$ and $t_2 \ge C_5$ for some absolute constants $C_i, 1 \le i \le 5$.
\end{theorem}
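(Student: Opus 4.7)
The plan is to reduce the problem to a diagonal $M$ via the SVD and then apply a block-matrix inverse identity in which the ``right'' choice of pivot produces an exact cancellation. Writing $M=P\Sigma Q^T$ with $\Sigma=\mathrm{diag}(s_1(M),\ldots,s_{n-k}(M),0,\ldots,0)$, rotational invariance of the Gaussian lets me replace $M+UV^T$ by $\Sigma+(P^TU)(Q^TV)^T$, where $P^TU$ and $Q^TV$ are again independent with i.i.d.\ $\N(0,1)$ entries. Since orthogonal conjugation preserves singular values, it suffices to bound $s_n(\Sigma+UV^T)$. I then split
\[
U=\begin{bmatrix}U_1\\U_2\end{bmatrix},\qquad V=\begin{bmatrix}V_1\\V_2\end{bmatrix},\qquad U_2,V_2\in\R^{k\times k},
\]
and write $D=\mathrm{diag}(s_1(M),\ldots,s_{n-k}(M))$, which is invertible with $\|D^{-1}\|=1/s_{n-k}(M)$.

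The crucial algebraic move is to reorder the block structure so that the $k\times k$ piece $U_2V_2^T$ lands in the top-left. After simultaneously permuting the last $k$ rows and columns to the front, I obtain the orthogonally-equivalent matrix
\[
\tilde M' = \begin{bmatrix} U_2V_2^T & U_2V_1^T \\ U_1V_2^T & D+U_1V_1^T \end{bmatrix},
\]
to which I apply Lemma \ref{lem:op_norm_of_block_inverse} with $A=U_2V_2^T$. Using $A^{-1}=V_2^{-T}U_2^{-1}$, a direct substitution produces the magical cancellation
\[
M/A \;=\; (D+U_1V_1^T) - U_1V_2^T\,V_2^{-T}U_2^{-1}\,U_2V_1^T \;=\; D,
\]
so $\|(M/A)^{-1}\|=1/s_{n-k}(M)$, $\|A^{-1}B\|\le\|V_2^{-1}\|\,\|V_1\|$, and $\|CA^{-1}\|\le\|U_1\|\,\|U_2^{-1}\|$. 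The lemma consequently gives
\[
s_n(M+UV^T)^{-1} \;\le\; \|V_2^{-1}\|\,\|U_2^{-1}\| + \frac{1}{s_{n-k}(M)}\bigl(1+\|V_2^{-1}\|\,\|V_1\|\bigr)\bigl(1+\|U_1\|\,\|U_2^{-1}\|\bigr).
\]

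The remaining step is purely probabilistic. Lemma \ref{lem:tail_sk} gives $\|U_2^{-1}\|,\|V_2^{-1}\|\le\sqrt{k}/t_1$, each with failure probability at most $Ct_1$, and Lemma \ref{lem:tail_s1} gives $\|U_1\|,\|V_1\|\le t_2\sqrt{n-k}$, each with failure probability at most $C_1\exp(-C_2 t_2^2 n)$. On the intersection of these four events (which by a union bound holds outside a set of probability $\lesssim t_1+\exp(-ct_2^2 n)$), the cross term $\|V_2^{-1}\|\|V_1\|$ is at most $t_2\sqrt{k(n-k)}/t_1$, and the hypotheses $t_1\le C_4$, $t_2\ge C_5$ make this cross term dominate the additive $1$. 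Using $a+b\le 2\max(a,b)$, the right-hand side then collapses to the shape $\tfrac{2k}{t_1^2}\max\bigl(1,\,4t_2^2(n-k)/s_{n-k}(M)\bigr)$, whose reciprocal is exactly \eqref{eq:main}.

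The main conceptual obstacle, as flagged in Section \ref{sec:previous}, is that standard distance-to-subspace arguments do not work here because the rank-$k$ perturbation shares randomness across all rows; my plan sidesteps this entirely by the linear-algebraic route above, and the real ``insight'' reduces to spotting the pivot choice $A=U_2V_2^T$ that causes the Schur complement to collapse to $D$. Beyond this, the only subtlety is ensuring $U_2,V_2$ are invertible when writing $A^{-1}$, which is a measure-zero event absorbed into the $\|U_2^{-1}\|,\|V_2^{-1}\|$ tail events. The complex and symmetric analogues of the theorem should then follow along identical lines, substituting Lemmas \ref{lem:tail_sk_complex} and \ref{lem:tail_s1_complex} for the real tail bounds.
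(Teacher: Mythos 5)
Your proof is correct and follows essentially the same route as the paper: reduce to diagonal $M$ via rotational invariance, apply Lemma \ref{lem:op_norm_of_block_inverse} with the pivot $A$ chosen as the $k\times k$ block coming from the null-space rows (so the Schur complement collapses to the nonzero part of $D$), then control all four spectral quantities via Lemmas \ref{lem:tail_sk} and \ref{lem:tail_s1}. The only (cosmetic) difference is that you order the singular values decreasingly and then permute the small block to the front, whereas the paper orders them increasingly so no permutation is needed; your constants in the final bound differ by an innocuous factor absorbable into $C_i$.
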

Our strategy to prove Theorem \ref{thm:main_rank_k} will reduce general $M$ to the case of $M$ nonnegative and diagonal, then express $s_n(M+UV^T)$ in terms of the singular values of $M$ and certain sub-matrices of $U$ and $V$, and finally apply tail bounds to said singular values.  We start by proving a lemma that allows us to reduce to the case of $M$ nonnegative and diagonal. As stated in Section \ref{sec:previous}, this is a compltely different proof strategy than the one used in previous works.
\begin{lemma}
\label{lem:diagonal_suffices}
Let $D=\textnormal{diag}(s_n(M),\cdots,s_1(M))$.  Let $U,V$ be as in Theorem \ref{thm:main_rank_k}.  Then the distributions of $s_n(M+UV^T)$ and of $s_n(D+UV^T)$ are identical.
\end{lemma}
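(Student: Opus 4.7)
The plan is to leverage two classical invariance properties in tandem: singular values are unaffected by multiplication on the left or right by an orthogonal matrix, and a matrix with i.i.d.\ standard Gaussian entries has a distribution invariant under orthogonal multiplication (and, in particular, under permutation of its rows).

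First, I would take the singular value decomposition $M=P\Sigma Q^T$, where $P,Q\in\R^{n\times n}$ are orthogonal and $\Sigma=\textnormal{diag}(s_1(M),\ldots,s_n(M))$ (note that the last $k$ diagonal entries of $\Sigma$ are $0$ since $M$ has rank $n-k$). Factoring $P$ out on the left and $Q^T$ on the right gives
\[
M+UV^T \;=\; P\bigl(\Sigma+(P^TU)(Q^TV)^T\bigr)Q^T,
\]
so by orthogonal invariance of singular values, $s_n(M+UV^T)=s_n\bigl(\Sigma+(P^TU)(Q^TV)^T\bigr)$.

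Next, I would invoke rotational invariance: each column of $U$ is an $\N(0,I_n)$ vector, so $P^TU\stackrel{d}{=}U$, and likewise $Q^TV\stackrel{d}{=}V$. Since $U$ and $V$ are independent and the transformations act on $U,V$ separately, the joint law $(P^TU,Q^TV)$ coincides with $(U,V)$. Pushing this through the continuous map $(A,B)\mapsto s_n(\Sigma+AB^T)$ yields $s_n(M+UV^T)\stackrel{d}{=} s_n(\Sigma+UV^T)$.

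Finally, to convert $\Sigma$ (decreasing entries) into $D$ (increasing entries), I would conjugate by the $n\times n$ reversal permutation matrix $R$ defined by $R_{ij}=\1[i+j=n+1]$. Since $R$ is orthogonal and $R\Sigma R^T=D$, we have
\[
s_n(\Sigma+UV^T)\;=\;s_n\bigl(R(\Sigma+UV^T)R^T\bigr)\;=\;s_n\bigl(D+(RU)(RV)^T\bigr),
\]
and because permuting the rows of a Gaussian matrix does not change its distribution, $(RU,RV)\stackrel{d}{=}(U,V)$, giving $s_n(D+UV^T)\stackrel{d}{=} s_n(\Sigma+UV^T)\stackrel{d}{=} s_n(M+UV^T)$.

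I do not anticipate any real obstacle here: the argument is a clean chain of distributional equalities from orthogonal/permutation invariance. The only care needed is to keep the two transformations acting on $U$ and $V$ separately so that independence is preserved, and to check that the reversal permutation indeed maps $\Sigma$ to $D$ (which is immediate since conjugation by a permutation matrix permutes the diagonal entries according to that permutation).
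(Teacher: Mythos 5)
Your proof is correct and follows essentially the same approach as the paper: take the SVD, absorb the orthogonal factors into $U$ and $V$ via rotational invariance, and conclude. The only difference is cosmetic — the paper writes the SVD directly as $LDR^T$ with $D$ having increasing diagonal entries (valid, since the SVD is defined up to joint permutation of singular values and corresponding singular vectors), whereas you start from the standard decreasing convention and then perform an explicit conjugation by the reversal permutation, which is slightly more verbose but equally valid.
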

\begin{proof}
Let $LDR^T=M$ be the singular value decomposition of $M$.  Then \[M+UV^T=LDR^T+UV^T=L(D+L^TUV^TR)R^T.\]  Left- and right- multiplication by unitary matrices preserves singular values so $$s_n(M+UV^T)= s_n(D+L^TUV^TR).$$ Finally, $U$ and $V$ are rotationally invariant, so $L^TU$ and $R^TV$ are distributed just as $U$ and $V$ are.
\end{proof}
Now we proceed to the main proof.

\begin{proof}[Proof of Theorem \ref{thm:main_rank_k}]
For any matrix $T$, recall that $T_S$ denotes the sub-matrix of $T$ which includes the $i$th row of $T$ if and only if $i\in S$.  
Lemma \ref{lem:diagonal_suffices} shows that we may assume $M$ is nonnegative and diagonal without loss of generality.  We may write $M$ and $M+UV^T$ in block form as
\[M=\begin{bmatrix}0&0\\0&M'\end{bmatrix}\quad\text{and}\quad
M+UV^T=\begin{bmatrix}
U_{[k]}V_{[k]}^T & U_{[k]}V_{\nremk}^T\\\\
U_{\nremk}V_{[k]}^T & M'+U_{\nremk}V_{\nremk}^T
\end{bmatrix}
\]
where $M'$ has no zeros on the diagonal.  We can use Lemma \ref{lem:op_norm_of_block_inverse} to upper bound $s_n(M+UV^T)$.  The factor corresponding to the Schur complement is
\[\left\|\left(M'-U_{\nremk}\left(I-
V_{[k]}^T(U_{[k]}V_{[k]}^T)^{-1}U_{[k]}
\right)V_{\nremk}^T\right)^{-1} \right \|=\|M'^{-1}\|=s_{n-k}(M)^{-1}\]
since $I-
V_{[k]}^T(U_{[k]}V_{[k]}^T)^{-1}U_{[k]}
 = 0$. This is one of the key insights of our proof. Then the resulting bound is
\begin{align*}
    s_n(M+UV^T)^{-1}
    &\le\frac{1}{s_k(U_{[k]})s_k(V_{[k]}^T)}
    +\frac1{s_{n-k}(M)}\left(1+\|(U_{[k]}V_{[k]}^T)^{-1}U_{[k]}V_{\nremk}^T\|\right)\left(1+\|U_{\nremk}V_{[k]}^T(U_{[k]}V_{[k]}^T)^{-1}\|\right)
  \\&=\frac{1}{s_k(U_{[k]})s_k(V_{[k]}^T)}
  +\frac1{s_{n-k}(M)}\left(1+\|V_{[k]}^{-1}V_{\nremk}^T\|\right)\left(1+\|U_{[k]}^{-1}U_{\nremk}\|\right)
  \\&\le\frac{1}{s_k(U_{[k]})s_k(V_{[k]}^T)}
  +\frac1{s_{n-k}(M)}\left(1+\|V_{[k]}^{-1}\|\|V_{\nremk}^T\|\right)\left(1+\|U_{[k]}^{-1}\|\|U_{\nremk}\|\right)
  \\&=\frac{1}{s_k(U_{[k]})s_k(V_{[k]}^T)}
  +\frac1{s_{n-k}(M)}\left(1+\frac{s_1(V_{\nremk})}{s_k(V_{[k]})}\right)\left(1+\frac{s_1(U_{\nremk})}{s_k(U_{[k]})}\right).
\end{align*}
Denote events
\begin{align*}
  \event_1 &=\left( s_1(U_{\nremk})\le t_2\sqrt{n-k}\quad\text{and}\quad s_1(V_{\nremk})\le t_2\sqrt{n-k}\right),   \\
  \event_2 &=\left( s_k(U_{k})\ge t_1/\sqrt{k}\quad\text{and}\quad s_k(V_{k})\ge t_1/\sqrt{k}\right).
\end{align*}
Conditioning on $\event_1$ and $\event_2$, the above bound becomes
\[
s_n(M+UV^T)^{-1}\le\frac1{s_{n-k}(M)}\left(1+\frac{t_2}{t_1}\sqrt{(n-k)\,k}\right)^2+\frac{k}{t_1^2}.
\]
For sufficiently large $n$ (specifically $n\ge6\frac{t_1^2}{k\,t_2^2}$), this becomes
\[s_n(M+UV^T)^{-1}\le\frac{k}{t_1^2}\left(\frac{2\,t_2^2\,(n-k)}{s_{n-k}(M)}+1\right)\le\frac{2k}{t_1^2}\max\left(\frac{2\,t_2^2\,(n-k)}{s_{n-k}(M)},1\right)\]
Taking the reciprocal of both sides yields
\begin{align*}
s_n(M+UV^T)\ge\frac{t_1^2}{2k}\min\left(\frac{s_{n-k}(M)}{2\,t_2^2\,(n-k)},1\right)
\end{align*}

The probability that this bound is violated is upper bounded by the probability that at least one of $\event_1$ or $\event_2$ fail.  We may upper bound this quantity using the union bound:
\begin{align*}
\p(\neg\event_1\vee\neg\event_2)
  &\le\p(\neg\event_1)+\p(\neg\event_2)
\\&\le\p(s_1(U_{\nremk})\ge t_2\sqrt{n-k})
+\p(s_1(V_{\nremk})\ge t_2\sqrt{n-k})
\\&\quad\quad+\p(s_k(U_{[k]})\le t_1/\sqrt{k})
+\p(s_k(V_{[k]})\le t_1/\sqrt{k})
\\&\le2\,C_1t_1+2\,C_2e^{-C_3\,t_2^2\,n}.
\end{align*}
where the last step follows by applying Lemmas \ref{lem:tail_sk} and \ref{lem:tail_s1} twice each.  The factors of $2$ can be subsumed into the constants $C_1$ and $C_2$ giving the final result.
\end{proof}

\begin{theorem}
\label{thm:main_rank_k_complex}
    Let $M,t_1,t_2,C_2,C_3$ be as in theorem \ref{thm:main_rank_k}.  Let $U,V$ be $n\times k$ complex matrices with real and imaginary parts drawn independently from $\mathcal{N}(0,1/2)$.  Then
    \[
        \p\left(s_n(M+UV^T)\le \frac{t_1^2}{k}\min\left(\frac1{2},\frac{s_{n-k}(M)}{4\,t_2^2\,(n-k)}\right)\right)\le 2\,t_1^2+\,C_2\exp({-C_3\,t_2^2\,n}).
    \]
\end{theorem}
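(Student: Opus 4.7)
The plan is to run the exact same argument as in the proof of Theorem \ref{thm:main_rank_k}, only substituting the complex tail bounds for the real ones at the end. The first step is to establish a complex analog of Lemma \ref{lem:diagonal_suffices}. If $M = LDR^*$ is the (complex) SVD of $M$ with $L, R$ unitary and $D$ diagonal with nonnegative entries, then
\[
M + UV^T = LDR^* + UV^T = L\bigl(D + L^* U V^T R\bigr) R^*,
\]
and $s_n(M+UV^T) = s_n(D + L^*UV^TR)$ since left/right multiplication by unitary matrices preserves singular values. The key observation is that the complex standard Gaussian is invariant under unitary transformations (its density depends only on $z^*z$), so $L^*U$ is distributed as $U$, and because the transpose of a unitary matrix is unitary, $R^TV$ is distributed as $V$. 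Hence $L^*U(R^TV)^T = L^*UV^TR$ has the same distribution as $UV^T$, and we may assume $M$ is nonnegative and diagonal.

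Next I would write $M+UV^T$ in the same block form as in the real proof, with the $k \times k$ upper-left block equal to $U_{[k]}V_{[k]}^T$ and the lower-right block a Schur-complement perturbation of $M'$. Applying Lemma \ref{lem:op_norm_of_block_inverse} and using the purely algebraic identity
\[
I - V_{[k]}^T \bigl(U_{[k]} V_{[k]}^T\bigr)^{-1} U_{[k]} = V_{[k]}^T V_{[k]}^{-T} U_{[k]}^{-1} U_{[k]} - V_{[k]}^T V_{[k]}^{-T} U_{[k]}^{-1} U_{[k]} \;=\; 0
\]
(which holds in exactly the same way over $\C$ as over $\R$, since $U_{[k]}V_{[k]}^T$ is almost surely invertible), the Schur-complement term collapses to $s_{n-k}(M)^{-1}$. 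The resulting upper bound on $s_n(M+UV^T)^{-1}$ in terms of $s_k(U_{[k]}), s_k(V_{[k]}), s_1(U_{\nremk}), s_1(V_{\nremk})$ is identical to the one derived in the real case, with no change of constants.

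To conclude, I would define the same two events $\event_1, \event_2$ controlling the extreme singular values of the relevant sub-blocks and observe that on $\event_1 \cap \event_2$ the asserted lower bound on $s_n(M+UV^T)$ holds verbatim. The only difference is in the application of the union bound: instead of Lemmas \ref{lem:tail_sk} and \ref{lem:tail_s1}, one uses their complex counterparts, Lemma \ref{lem:tail_sk_complex} (which gives probability bound $t_1^2$ rather than $C_1 t_1$) and Lemma \ref{lem:tail_s1_complex} (which gives an additional factor of $2$, harmlessly absorbed into $C_2$). This yields $\p(\neg \event_1 \vee \neg \event_2) \le 2 t_1^2 + C_2 \exp(-C_3 t_2^2 n)$, which is the claimed bound.

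There is no real obstacle here: the proof is essentially mechanical once the reduction to diagonal $M$ is checked. The only step that requires any care is verifying that complex Gaussian matrices are invariant under left-multiplication by unitary $L^*$ and that $V^TR$ has the correct distribution — the latter requires noting that $R^T$ is unitary whenever $R$ is, which follows immediately from $\overline{R}R^T = \overline{R^*R} = I$. Everything else transfers without modification.
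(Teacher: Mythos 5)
Your proof is correct and follows essentially the same route as the paper: reduce to diagonal $M$ via unitary invariance of the complex Gaussian, reuse the block/Schur-complement bound verbatim, and substitute the complex tail bounds into the union bound. You are in fact slightly more thorough than the paper's one-line proof, which mentions only the substitution of Lemma \ref{lem:tail_sk_complex} and does not note that Lemma \ref{lem:tail_s1_complex} must also replace Lemma \ref{lem:tail_s1} when bounding $\p(\neg\event_1)$ for the complex blocks $U_{\nremk}, V_{\nremk}$; your handling of this (absorbing the extra factor of $2$ into $C_2$) is the right call, though strictly it means $C_2$ is twice the real-case constant rather than literally identical.
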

Note that the first term on the righthand side is $2t_1^2$ rather than $C_1t_1$ as it was in Theorem \ref{thm:main_rank_k}.
\begin{proof}
The only place the proof differs from the proof of Theorem \ref{thm:main_rank_k} is in the upper bound on $\p(\neg \event_1)$.  Instead of $C_1t_1$, it is simply $t_1^2$ by Lemma \ref{lem:tail_sk_complex}.
\end{proof}

\begin{remark}
Theorems \ref{thm:main_rank_k} and \ref{thm:main_rank_k_complex} hold when instead of sampling $U$ and $V$ independently, simply set $U=V$.
\end{remark}
\begin{proof}
The proof follows almost exactly as before with only a single modification: In Lemma \ref{lem:diagonal_suffices}, the left- and right- singular vectors of symmetric matrices are the same so $L=R$ (so $L^TU=R^TV$).  Optionally, one may note that events $\event_1$ and $\event_2$ are redundant, so one reduces the bound on $\p(\neg\event_1\vee\neg\event_2)$ by a factor of $2$.
\end{proof}

\begin{remark}\label{rmk:unavoid1}
Let us briefly mention why the term $s_{n-k}(M)$ is unavoidable in the statement of Theorem \ref{thm:main_rank_k}. For simplicity, consider $k=1$ and suppose that $M$ is of rank $n-1$ and suppose its smallest nonzero singular value is equal to $\delta$. After adding a rank $1$ term $uv^T$ to $M$, its rank is $n$ with probability $1$. However, if we consider the limit $\delta \rightarrow 0$, then $M+uv^T$ approaches a rank $n-1$ matrix meaning $s_n(M+uv^T) \rightarrow 0$. Hence, any concentration bound such as \eqref{eq:main} must depend on the term $s_{n-1}$.
\end{remark}

\begin{remark}\label{rmk:unavoid2}
The term $t_1$ on the right hand size of \eqref{eq:main} is also unavoidable. For simplicity, consider the case that $M \in \R^{n \times n}$ is a diagonal matrix and all the entries are non zero except the last diagonal entry and consider a rank $1$ symmetric update. Now after a symmetric rank $1$ update, the perturbed matrix is $M+gg^T$ where $g \in \R^n$. The smallest singular value of a symmetric matrix is equivalent to the smallest absolute eigenvalue. From the Raleigh quotient characterization of eigenvalues, we have that \[\p(s_n(M+gg^T) \le t^2) \ge \p( | e_n^T(M+gg^T)e_n | \le t^2).\]
Using the fact that $Me_n = 0$, we have
\[ \p( | e_n^T(M+gg^T)e_n | \le t^2) = \p(z^2 \le t^2)\]
where $z \in \R$ is a standard normal. Finally, $\p(z^2 \le t^2) = \p(|z| \le t) = \Theta(t)$ from Proposition \ref{prop:gaussian_ball} for $t$ sufficiently small.
\end{remark}

\subsection{Sub-Gaussian perturbations}\label{sec:subgauss}

Just as Tao and Vu generalized Theorem \ref{thm:ST} to the case where more general types of random perturbations beyond Gaussian are used, it is of interest to generalize Theorem \ref{thm:main_rank_k} to the case where $U,V$ are from a general family of distributions. A standard choice are mean zero sub-Gaussian distributions since they encompass well known distributions such as the standard Gaussian and $\pm 1$ (Rademacher) random variables. Surprisingly, we show in this case that we cannot state a general statement like Theorem \ref{thm:main_rank_k} unless extra assumptions about the fixed matrix $M$ is made. 

\begin{lemma}\label{lem:subgauss}
Let $u,v \in \R^n$ be vectors with i.i.d.\ entries that are $\pm 1$ (Rademacher) with equal probability. There exists a rank $n-1$ matrix $M$ such that with constant probability, $M+uv^T$ is singular.
\end{lemma}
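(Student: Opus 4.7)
The plan is to leverage the generalized matrix determinant lemma, $\det(M + uv^T) = \det(M) + v^T \mathrm{adj}(M)\, u$, to reduce singularity of $M + uv^T$ to a bilinear anti-concentration question about Rademacher vectors. Since $\mathrm{rank}(M) = n-1$ forces $\det M = 0$ and $\mathrm{adj}(M)$ to have rank exactly one (its columns lie in $\ker M$ and its rows in $\ker M^T$), $\mathrm{adj}(M)$ factors as $\alpha\, w_R w_L^T$ with $w_R, w_L$ spanning $\ker M$ and $\ker M^T$ respectively. Hence $M + uv^T$ is singular iff $(v^T w_R)(u^T w_L) = 0$, and the whole problem reduces to picking $M$ so that a Rademacher vector is orthogonal to $w_R$ with constant probability.

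For the construction I would take $w_R = e_1 + e_2$. Explicitly, let $M$ be the matrix whose first two columns are $e_1$ and $-e_1$ and whose remaining columns are $e_3, \ldots, e_n$. A direct check shows $M(e_1 + e_2) = 0$ while the column space equals $\mathrm{span}(e_1, e_3, \ldots, e_n)$, so $M$ has rank exactly $n-1$ with $\ker M = \mathrm{span}(e_1 + e_2)$. For Rademacher $v$, the event $\{v_1 + v_2 = 0\}$ has probability exactly $1/2$, and on this event $e_1 + e_2$ lies in the kernel of $M + uv^T$ for every choice of $u$, so $M + uv^T$ is singular, giving the desired constant.

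The argument is short once the framework is set up; the main point here is conceptual rather than technical. The construction also highlights why the proof of Theorem \ref{thm:main_rank_k} cannot survive the transition from Gaussian to Rademacher noise: discrete distributions place positive mass on hyperplanes $\{x : \langle x, w\rangle = 0\}$ whenever $w$ is supported on few coordinates, so the bilinear form $v^T \mathrm{adj}(M) u$ from the matrix determinant lemma can be forced to vanish on a constant fraction of the sample space simply by engineering the kernel of $M$ to be spanned by a $2$-sparse vector. No analogue of Proposition \ref{prop:gaussian_ball} is available to forbid this, which is precisely the gap one would have to close (and cannot) to extend the main theorem to general sub-Gaussian perturbations.
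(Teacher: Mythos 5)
Your proof is correct and, in fact, cleaner and more explicit than the paper's. The paper reuses its SVD reduction $s_n(M+uv^T) = s_n(D + (L^Tu)(v^TR))$ from Lemma~\ref{lem:diagonal_suffices} and argues informally: if the top-left entry of $D$ is $0$ and the relevant column of $L$ is $O(1)$-sparse, then $(L^Tu)_1 = 0$ with constant probability, wiping out the first row of $D + (L^Tu)(v^TR)$. No explicit $M$ is exhibited, and the paper slightly misstates the sparsity condition as being on a \emph{row} of $L$ when it is the first \emph{column} of $L$ that enters $(L^Tu)_1$. You instead go through the matrix determinant lemma to identify the singularity condition $(v^Tw_R)(u^Tw_L)=0$ with $w_R, w_L$ spanning $\ker M$ and $\ker M^T$, and then construct a concrete rank-$(n-1)$ matrix whose kernel is spanned by the $2$-sparse vector $e_1+e_2$; the event $\{v_1+v_2=0\}$ has probability exactly $1/2$ and directly places $e_1+e_2$ in $\ker(M+uv^T)$, which you don't even need the determinant lemma to see. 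The two proofs are dual views of the same obstruction: the paper engineers a sparse vector in the \emph{left} kernel (via $L$ and $D_{11}=0$), you engineer one in the \emph{right} kernel of $M$ itself. Your framing via $\mathrm{adj}(M) = \alpha\,w_R w_L^T$ has the small advantage of making transparent exactly what freedom one has (any $M$ with a sparse kernel or co-kernel vector works) and why the Gaussian case is immune (Proposition~\ref{prop:gaussian_ball} gives vanishing mass on any hyperplane), so your closing paragraph accurately identifies the gap the paper is pointing at.
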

\begin{proof}
As in the proof of Theorem \ref{thm:main_rank_k}, let $M = LDR$ and we can say $s_n(M+uv^T) = s_n(D+(L^Tu)(v^TR))$. In the case that $u$ and $v$ are Gaussian, rotational invariance implies that $L^Tu$ and $v^TR$ are distributed as $u,v$ respectively. However, this is no longer the case if $u,v$ have entries coming from general sub-Gaussian distributions, such as $\pm 1$. Here, the properties of $L, R$ can have substantial impact on $s_n(M+uv^T)$. 

Suppose that the top left entry of $D$ is $0$. Then, if the first row of $L$ is \emph{sparse}, i.e. has $O(1)$ non zero entries, then it is possible that the first coordinate of $L^Tu$, $(L^Tu)_1,$ is $0$ with constant probability and hence the first row of $D+(L^Tu)(v^TR)$ is all $0$ which implies that $M+uv^T$ is still rank $n-1$ with constant probability. 
\end{proof}

Therefore, a general statement such as Theorem \ref{thm:main_rank_k} in the case of sub-Gaussian distributions is impossible unless extra assumptions are made about the input matrix $M$. However, we note that in the $k=1$ case, if we \emph{assume} every row of $L, R$ are \emph{dense} (say have at least a constant fraction of non-zero entries), then the proof of Theorem \ref{thm:main_rank_k} carries through in the $\pm 1$ case since the two estimates we need (corresponding to the events $\event_1$ and $\event_2$ respectively) are the concentration of the norms of $L^Tu, v^TR$ and each entry being anti-concentrated from $0$ which follows from Erdos-Littlewood-Offord type results. It is not clear when such an assumption is natural.

\subsection{Application to linear systems}\label{sec:systems}
We briefly highlight the importance of the condition number in solving systems of linear equations. If we are interested in solving the system $Ax = b$ where $A \in \mathbb{R}^{n \times n}$ then the condition number of $A$ influences both the stability and runtime of linear systems solving. Much of this material is standard and can be found in \cite{NLA}.
\\

\noindent \textbf{Stability:} If $\tilde{x}$ denotes the result computed by numerical algorithms to the equation $Ax = b$ then it is known that the relative error quantity $\|x-\tilde{x}\|/\|x\|$ satisfies
\[\frac{\|x-\tilde{x}\|}{\|x\|} = O\left(\eps_{\text{machine}} \cdot \frac{s_1(A)}{s_n(A)} \right) \]
where $\eps_{\text{machine}}$ is the machine precision.
\\

\noindent \textbf{Runtime:} One of the most widely used algorithms for solving systems of linear equations, especially large sparse ones that arise often in practice, is the conjugate gradient descent method. If the conjugate gradient descent method is run for $k$ steps, then its convergence scales roughly as
$$ \left(\frac{\sqrt{s_1(A)/s_n(A)}-1}{\sqrt{s_1(A)/s_n(A)}+1} \right)^k \approx \left(1-\frac{2}{\sqrt{s_1(A)/s_n(A)}} \right)^k.$$
Therefore, a larger the condition number means more steps of the conjugate gradient descent method are required. 

The usefulness of our low-rank error model is further supported by the conjugate gradient descent method. As mentioned previously, this iterative method is mainly used for large sparse systems. Thus, a low-rank perturbation that only requires additional linear space and incurs an additive linear increase in cost per iteration is desirable compared to a dense perturbation which makes the original problem infeasible for large systems.

\section{Perturbation beyond rank $k$ }\label{sec:beyond}
\newcommand{\lemmafourpointforub}[1]{
C\left(
\frac{1}{x_2\cdot #1}\left(1+
\frac{{x_1\cdot x_3^{1/2}\cdot\sqrt{nk}}}{s_n(M)}\right)
+\exp\left(-{x_1^2}/4\right)
+ (2/\pi)^{k/2}x_2^k + \exp(-c\,x_3)\right)
}

In this section, we deal with the case that we add a rank $k'$ perturbation to a rank $n-k$ matrix for $k' > k$. In such a case, we simply ignore a rank $k$ portion of the noise and imagine that we are adding a rank $(k'-k)$ perturbation to a general full-rank matrix. This is valid since the original rank $k$ matrix plus the rank $k$ part of the noise will be full rank with probability $1$. Our result then is the following.

\begin{theorem}\label{thm:beyond_k}
	Let $M$ be a $n \times n$ real matrix with $\text{rank}(M) = n>10$ with smallest singular value $s_n$. Let $U,V \in \mathbb{R}^{n \times k}$ have independent $\mathcal{N}(0,1)$ coordinates. Then,
\begin{equation}\label{eq:beyond_k}
	\p(s_n(M+UV^T) \le 1/t) \le
	C\left(
\frac{\sqrt n}{x_2\cdot t}\left(1+
\frac{{x_1\cdot x_3^{1/2}\cdot\sqrt{nk}}}{s_n(M)}\right)
+\exp\left(-{x_1^2}/4\right)
+ (2/\pi)^{k/2}x_2^k + \exp(-c\,x_3)\right)
\end{equation}
	for all $t > 0$, $x_1 \ge 3\sqrt{2 \log (2nk)}, x_3 \ge nk$, and $x_2 \le 1$.
\end{theorem}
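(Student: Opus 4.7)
The plan is to adapt the geometric ``distance-to-a-hyperplane'' approach of Sankar--Spielman--Teng (and its refinements by Tao--Vu) for lower-bounding the smallest singular value of a random matrix, while handling the shared randomness inherent in $UV^T$ through careful conditioning on the auxiliary randomness outside a single row of $U$. Writing $A = M + UV^T$ with rows $r_1, \ldots, r_n$ and $d_j = \mathrm{dist}(r_j, \mathrm{span}\{r_i : i \ne j\})$, I use the Frobenius identity $\|A^{-1}\|_F^2 = \sum_j d_j^{-2}$, which yields $s_n(A) \ge (\sum_j d_j^{-2})^{-1/2}$. A Markov-type argument on $\|A^{-1}\|_F$, together with Jensen's inequality, reduces the problem to bounding $\E[d_1^{-2}]$ on a high-probability ``good'' event where $d_1$ is separated from $0$, producing the main factor of $\sqrt{n}/(x_2 t)$ without incurring an extra factor of $n$.

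For the single row $j=1$, decompose $r_1 = m_1 + V u_1$, where $m_1 = M^T e_1$ is fixed and $u_1 = U^T e_1 \in \R^k$ is the first row of $U$, independent of the remaining rows $\tilde U \in \R^{(n-1) \times k}$. Conditioning on $V$ and $\tilde U$ makes the unit normal $w$ to $\mathrm{span}\{r_2, \ldots, r_n\}$ deterministic and independent of $u_1$, so
\[
w^T r_1 \;=\; w^T m_1 + (V^T w)^T u_1 \;\sim\; \mathcal N\!\bigl(w^T m_1,\ \|V^T w\|^2\bigr)
\]
conditionally. Standard Gaussian anti-concentration then yields $\p(d_1 \le \sqrt{n}/t \mid V, \tilde U) \le \sqrt{2n/\pi}/(t\,\|V^T w\|)$.

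The crux is to lower-bound $\|V^T w\|$ on a high-probability event. The defining relation $(\tilde M + \tilde U V^T) w = 0$ rearranges into the key identity
\[
\tilde U\,(V^T w) \;=\; -\tilde M w,
\]
so that $\|V^T w\| \ge \|\tilde M w\|/\|\tilde U\|$. I then combine three auxiliary estimates: (i) the operator-norm bound $\|\tilde U\| \lesssim \sqrt{x_3 \cdot k}$, with failure probability $\lesssim \exp(-c\,x_3)$, from Gaussian concentration of $\|\tilde U\|_F$; (ii) the Pythagorean identity $\|\tilde M w\|^2 = \|M w\|^2 - (m_1^T w)^2 \ge s_n(M)^2 - (m_1^T w)^2$ together with a Gaussian-type tail bound $|m_1^T w| \lesssim x_1$ of failure probability $\lesssim \exp(-x_1^2/4)$; and (iii) an independent direct bound $\|V^T w\| \ge x_2$ of failure probability $\lesssim (2/\pi)^{k/2}\, x_2^k$, obtained by viewing $V^T w$ as a (conditionally) Gaussian $k$-vector. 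Combining these on the intersection of good events gives
\[
\frac{1}{\|V^T w\|} \;\le\; \frac{1}{x_2}\!\left(1 + \frac{x_1\, x_3^{1/2}\,\sqrt{nk}}{s_n(M)}\right),
\]
and substituting this into the previous anti-concentration estimate and collecting the failure probabilities yields the theorem.

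The main obstacle will be the tail estimate $|m_1^T w| \lesssim x_1$ in step (ii). Because of the shared randomness, $w$ is an implicit function of both $\tilde U$ and $V$, and the dot product with the fixed vector $m_1$ is not manifestly Gaussian. I plan to isolate a one-dimensional free Gaussian direction inside $\tilde U$ (or $V$), conditional on the remaining auxiliary statistics, whose effect on $w$ can be computed through a first-order perturbation of the kernel equation $(\tilde M + \tilde U V^T)w = 0$; the perpendicular Gaussian in that direction then supplies the genuine Gaussian tail for $m_1^T w$. A parallel argument is needed in (iii): one must show that $V^T w$ retains $k$ essentially free Gaussian directions despite the coupling through $w$, reflecting the intuition that $V$ contributes $nk$ Gaussian degrees of freedom while $w$ is pinned by only $n - 1$ linear equations.
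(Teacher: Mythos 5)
Your proposal correctly identifies the conditioning step (fix $V$ and the other rows of $U$ so that $w$ becomes deterministic and the remaining row of $U$ yields a conditional Gaussian), the key identity $\tilde U\,(V^Tw)=-\tilde M w$, and the event structure parameterized by $x_1,x_2,x_3$. However, the central lower bound on $\|V^Tw\|$ has a genuine gap. The route $\|V^Tw\|\ge\|\tilde M w\|/\|\tilde U\|$ combined with $\|\tilde M w\|^2\ge s_n(M)^2-(m_1^Tw)^2$ collapses in exactly the hard regime: when $w$ is near the one-dimensional null direction of $\tilde M$, $\|\tilde M w\|\to0$ while $|m_1^Tw|=\|Mw\|$ can far exceed $s_n(M)$, so the right-hand side goes negative and the bound is vacuous. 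Your acknowledged obstacles---that $|m_1^Tw|$ and $V^Tw$ are not manifestly Gaussian because $w$ depends implicitly on both $V$ and $\tilde U$---are real, and the proposed fixes (a first-order perturbation of the kernel equation, appealing to ``free Gaussian directions'' in $V^Tw$) are not clearly realizable; $w$ is a normalized solution of a bilinear system in the randomness and neither quantity is Gaussian even conditionally.

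The missing device, which is the heart of Lemma~\ref{lem:gaussian}, is to route everything through the \emph{fixed} unit vector $y$ spanning the null space of $\tilde M$ (determined entirely by $M$). One decomposes $w=w^{\|}+w^\perp$ relative to the row space of $\tilde M$, so $w^\perp$ is parallel to $y$. The defining relations $\langle w,m^i\rangle=-\langle V^Tw,u^i\rangle$ together with Cauchy--Schwarz and event $\event_1$ (all entries of $U$ at most $x_1$) give $\|w^{\|}\|\le\|V^Tw\|\,x_1\sqrt{nk}/s_n$ purely algebraically; no probabilistic tail bound on any inner product with $w$ is needed. Event $\event_2$ instead asserts $\|V^Ty\|\ge x_2$ for the fixed $y$: since $V^Ty$ genuinely is a $\mathcal N(0,I_k)$ vector, the failure probability $(2/\pi)^{k/2}x_2^k$ follows immediately. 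Then $\|w^\perp\|\|V^Ty\|=\|V^Tw^\perp\|\le\|V^Tw\|+\|V\|\|w^{\|}\|$, together with event $\event_3$ ($\|V\|^2\le x_3$) and a small computation showing $\|w^\perp\|>1/2$, delivers the contrapositive lower bound on $\|V^Tw\|$. Separately, the reduction from $\|A^{-1}\|$ to a single direction is not carried out via the Frobenius/negative-second-moment identity---since $\p(d_j\le\eps)=\Theta(\eps)$, the moments $\E[d_j^{-2}]$ are infinite, so Markov plus Jensen will not produce the claimed $\sqrt n$ overhead---but by applying the single-vector Lemma~\ref{lem:single_vec} to a uniformly random unit $s$ and using that $|s^Tv_*|\ge1/\sqrt n$ with constant probability for the maximizing direction $v_*$ of $A^{-1}$ (Lemmas~\ref{lem:SVD} and~\ref{lem:gaussian2}).
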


We obtain the following corollary under some natural parameter settings.

\begin{corollary}Let $M$ be a $n \times n$ real matrix with $\text{rank}(M) = n$, and $U,V \in \mathbb{R}^{n \times k}$ have independent $\mathcal{N}(0,1)$ coordinates. Then for all $\eps \in (0,1)$,
\begin{equation}
	\p(s_n(M+UV^T)\le\eps/\sqrt n)\le
	C\left(
	\sqrt\eps\left(
	1+\frac{3nk\sqrt{\log(2nk)}}{s_n(M)}
	\right)
	+
	\frac1{(2nk)^{9/4}}
	+
	\exp(-c nk)
	\right).
\end{equation}
\end{corollary}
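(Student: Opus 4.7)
The plan is to derive the corollary as a direct specialization of Theorem \ref{thm:beyond_k} by making the right choices of the free parameters $t$, $x_1$, $x_2$, and $x_3$. First, to turn a bound on $\p(s_n(M+UV^T) \le 1/t)$ into one on $\p(s_n(M+UV^T) \le \eps/\sqrt{n})$, I set $t = \sqrt{n}/\eps$, so that the leading prefactor $\sqrt{n}/(x_2\,t)$ on the right-hand side of \eqref{eq:beyond_k} collapses to $\eps/x_2$.

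Next I choose the remaining parameters to match the four terms of the corollary in turn. To convert the linear $\eps/x_2$ factor into the target $\sqrt{\eps}$ dependence while keeping the $x_2^k$ term tame, I set $x_2 = \sqrt{\eps}$. This gives an overall $\sqrt{\eps}$ contribution from the first term, and yields $(2/\pi)^{k/2}\eps^{k/2}$ for the third term, which for $k\ge 1$ and $\eps\in(0,1)$ is bounded by $\sqrt{2/\pi}\cdot\sqrt{\eps}$ and thus absorbed into the first term at the cost of a universal constant. For $x_1$ and $x_3$ I take the smallest values allowed by the hypotheses of Theorem \ref{thm:beyond_k}, namely $x_1 = 3\sqrt{2\log(2nk)}$ and $x_3 = nk$. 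The former turns $x_1\cdot x_3^{1/2}\cdot\sqrt{nk}$ into $3\sqrt{2}\,nk\sqrt{\log(2nk)}$, matching the first term of the corollary up to the absolute constant $3\sqrt{2}$; the latter turns $\exp(-c\,x_3)$ into $\exp(-cnk)$, matching the fourth term.

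The only term left to verify is the second: with $x_1 = 3\sqrt{2\log(2nk)}$ one computes $\exp(-x_1^2/4) = (2nk)^{-9/2}$, which is bounded above by $(2nk)^{-9/4}$ for all $2nk\ge 1$. Collecting the four pieces and absorbing the resulting $O(1)$ constants into $C$ gives exactly the claimed bound. There is no genuine technical obstacle here; the whole argument is a parameter optimization downstream of Theorem \ref{thm:beyond_k}. The only mildly subtle point is recognizing that $x_2 = \sqrt{\eps}$ (rather than, say, $x_2 = \eps$) is the choice that produces the $\sqrt{\eps}$ dependence in the first term while simultaneously keeping the $x_2^k$ contribution harmless.
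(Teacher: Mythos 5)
Your proposal is correct and takes essentially the same route as the paper: the paper's proof is the one-liner ``Set $t=\sqrt n/\eps$, $x_1=3\sqrt{\log(2nk)},x_2=\sqrt\eps,x_3=nk$.'' The one place you diverge is worth noting: the paper sets $x_1=3\sqrt{\log(2nk)}$, which strictly speaking violates the hypothesis $x_1\ge 3\sqrt{2\log(2nk)}$ of Theorem \ref{thm:beyond_k} (this appears to be a slip in the paper, and it is what produces the exponent $9/4$ rather than $9/2$ and the coefficient $3$ rather than $3\sqrt2$ in the corollary). Your choice $x_1=3\sqrt{2\log(2nk)}$ actually respects the hypothesis, and as you observe the resulting extra factor of $\sqrt2$ in the first term and the stronger decay $(2nk)^{-9/2}\le(2nk)^{-9/4}$ in the second are harmlessly absorbed into the constant $C$, so the corollary still follows.
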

\begin{proof}
Set $t=\sqrt n/\eps$, $x_1=3\sqrt{\log(2nk)},x_2=\sqrt\eps,x_3=nk$.
\end{proof}
By setting $\eps$ appropriately, we recover the common `theme' of $\p(s_n(M+UV^T) \le n^{-A}) \le n^{-B}$ as in the case of Theorem \ref{thm:ST} and the works of Tao and Vu \cite{TaoVuCondition1, TaoVuCondition2}.

We now proceed to prove Theorem \ref{thm:beyond_k}. Denote $A = M+UV^T$ and note that $\text{rank}(A) = n$ with probability $1$ so $A^{-1}$ exists.  We observe that \eqref{eq:beyond_k} reduces to bounding $\p(\|A^{-1}\| \ge t)$. Our proof is adapted from the proof of Theorem \ref{thm:ST}. However, we need to perform a careful conditioning argument to prove the most important part of the argument which is presented in Lemma \ref{lem:gaussian}.

We begin by handing the case of a single vector.

\begin{lemma}\label{lem:single_vec}
	For any unit vector $y$, we have
\[\p(\|A^{-1}y\|_2 \ge t) \le
\lemmafourpointforub{t}
\]
	for all $t > 0$, $x_1 \ge 2\sqrt{ \log n}, x_3 \ge nk$, and $x_2 \le 1$.
\end{lemma}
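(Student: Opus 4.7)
The plan is to extend the geometric proof underlying Sankar--Spielman--Teng's Theorem~\ref{thm:ST} to our low-rank setting. The crucial feature that makes this feasible is that conditional on $V$, the rows of $A = M + UV^T$ are independent, so a row-by-row SST-style analysis can be adapted.

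First I would reduce to $y = e_1$ by rotational invariance. Using $QV \stackrel{d}{=} V$ for orthogonal $Q$, we have
\[
QAQ^T \;=\; QMQ^T + (QU)(QV)^T \;\stackrel{d}{=}\; QMQ^T + UV^T,
\]
and $s_n(QMQ^T) = s_n(M)$, so choosing $Q$ with $Qy = e_1$ lets us assume $y = e_1$. Then the classical identity yields $\|A^{-1}e_1\| = 1/|r_1 \cdot w|$, where $r_i$ denotes the $i$-th row of $A$ and $w$ is any unit vector orthogonal to $\mathrm{span}(r_2,\dots,r_n)$.

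Now condition on $V$ and on $U_{[n]\setminus[1]}$ (rows $2$ through $n$ of $U$). Under this conditioning $w$ is determined, while $r_1 = M_1 + U_1 V^T$ is random only through $U_1\sim\N(0, I_k)$. Hence
\[
r_1\cdot w \;=\; M_1\cdot w + U_1(V^T w) \;\sim\; \N\bigl(M_1\cdot w,\;\|V^T w\|^2\bigr) \quad \text{conditionally},
\]
and Gaussian anti-concentration gives $\p\bigl(|r_1\cdot w| \le 1/t \,\big|\, V, U_{[n]\setminus[1]}\bigr) \lesssim 1/(t\|V^T w\|)$. Splitting the expectation on whether $\|V^T w\| \ge x_2$ or $\|V^T w\| < x_2$ produces a leading term of size $C/(x_2 t)$ plus the error $\p(\|V^T w\| < x_2)$, together with similar high-probability controls on auxiliary quantities that enter the full bound.

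The main obstacle will be bounding $\p(\|V^T w\| < x_2)$, the content of Lemma~\ref{lem:gaussian}. The difficulty is that $w$ itself depends on $V$ through the null space of $M_{[n]\setminus[1]} + U_{[n]\setminus[1]} V^T$, so one cannot directly apply Gaussian anti-concentration to $V^T$ acting on an independent vector. I would resolve this by instead conditioning on $U$: this restricts $w$ to lie in the deterministic $(k{+}1)$-dimensional subspace $N_U := \ker\bigl(P^\perp_{U_{[n]\setminus[1]}} M_{[n]\setminus[1]}\bigr) \subset \R^n$ (depending only on $U$ and $M$, where $P^\perp_{U_{[n]\setminus[1]}}$ projects onto the orthogonal complement of $\mathrm{col}(U_{[n]\setminus[1]})$), with the further linear constraint $V^T w = -U_{[n]\setminus[1]}^+ M_{[n]\setminus[1]} w$. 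Because restricting $V^T$ to an orthonormal basis of $N_U$ yields a $k\times(k{+}1)$ matrix with iid $\N(0,1)$ entries (by rotational invariance of the rows of $V$), this constraint becomes a null-vector problem for a shifted Gaussian matrix, and a $k$-dimensional Gaussian anti-concentration argument produces the term $(2/\pi)^{k/2} x_2^k$. The remaining error terms $\exp(-x_1^2/4)$ and $\exp(-c x_3)$ arise from high-probability control of Gaussian tails of scalar quantities (such as $M_1\cdot w$) and the Frobenius norms of $U$, $V$ respectively, while the extra factor $1 + x_1 x_3^{1/2}\sqrt{nk}/s_n(M)$ reflects the influence of $\|M^{-1}\|$ entering through the deterministic shift $-U_{[n]\setminus[1]}^+ M_{[n]\setminus[1]}$ and related objects like $\|M^{-1}U\|_F \lesssim \sqrt{nk}/s_n(M)$, which can blow up when $M$ is nearly singular.
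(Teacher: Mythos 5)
Your proof of Lemma~\ref{lem:single_vec} itself follows the paper's route: rotate $y$ to a standard basis vector (you conjugate $A\mapsto QAQ^T$ while the paper left-multiplies $A\mapsto QA$, but both preserve the singular values of $M$ and reduce to the same thing), invoke the identity $\|A^{-1}e_1\|=1/|\langle r_1,w\rangle|$, observe that conditionally on $V$ and the other rows of $U$ the quantity $\langle r_1,w\rangle$ is Gaussian with variance $\|V^Tw\|^2$, and hence reduce the lemma to a lower bound on $\|V^Tw\|$.

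Where you genuinely diverge is in the sketch for Lemma~\ref{lem:gaussian}. You propose to condition on $U$, note that the orthogonality constraint $(M_{[n]\setminus[1]}+U_{[n]\setminus[1]}V^T)w=0$ forces $w$ into the $(k{+}1)$-dimensional subspace $N_U$ with $V^Tw=-U_{[n]\setminus[1]}^+M_{[n]\setminus[1]}w$, and then treat this as a null-vector problem for a shifted $k\times(k{+}1)$ Gaussian matrix. The paper does something else entirely: it conditions on three high-probability events for \emph{both} $U$ and $V$, decomposes $w=w^\|+w^\perp$ relative to the row span of $M_{[n-1]}$, and deterministically lower-bounds $\|V^Tw\|$ by comparing it with $\|V^Ty\|$, where $y$ is the \emph{fixed} unit vector orthogonal to those rows of $M$. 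Introducing the fixed $y$ is the move that decouples $V$ from $w$: it makes $V^Ty\sim\mathcal N(0,I_k)$ independent of everything relevant, so $(2/\pi)^{k/2}x_2^k$ falls out as a scalar Gaussian anti-concentration estimate. In your framing there is a circularity left unresolved: writing $\hat w$ for $w$ in coordinates of $N_U$ and $\tilde L$ for the restriction of $-U_{[n]\setminus[1]}^+M_{[n]\setminus[1]}$, you have $\|V^Tw\|=\|\tilde L\hat w\|$ with $\tilde L$ fixed given $U$, but $\hat w$ is exactly the kernel direction of the shifted Gaussian $\tilde V-\tilde L$, so it is correlated with $\tilde V$ and its direction relative to $\ker\tilde L$ is not obviously well-spread; making that quantitative would require further work and would likely introduce extra factors of $\|\tilde L\|$ or of $s_k(\tilde L)^{-1}$ that the paper's bound does not have. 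Finally, a smaller point: splitting at $\|V^Tw\|\ge x_2$ produces a leading term $C/(x_2t)$ \emph{without} the factor $1+x_1x_3^{1/2}\sqrt{nk}/s_n(M)$; the paper instead shows $\|V^Tw\|\ge z$ for the smaller threshold $z\simeq x_2/\bigl(1+x_1x_3^{1/2}\sqrt{nk}/s_n\bigr)$, and it is $1/(zt)$ that gives the stated leading term, so the split should be at $z$, not at $x_2$.
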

\begin{proof}
	Let $Q$ be a rotation that takes $y$ to $e_n$ and denote $QA$ as $A'$. Then,
	\[\|A^{-1}y\|_2 = \|A^{-1}Q^Te_n\|_2 = \|(QA)^{-1}e_n\|_2 = \|c_n\|_2. \]
	where $c_n$ be the last column of $(QA)^{-1}$. From the identity $A'A'^{-1} = I$, we have that $c_n$ is orthogonal to the first $n-1$ rows of $A'$ and has dot product $1$ with the last row of $A'$. Hence, 
	\[ \|c_n\|_2 = \frac{1}{|\inr{w}{r^n}|} \]
	where $r^i$ is the $i$th row of $A'$ and $w$ is the unique unit vector orthogonal to the span of $\{r^1, \cdots, r^{n-1}\}$ (up to to sign).
	This means that 
	 \[ \p(\|A^{-1}y\|_2 \ge t)  = \p(\|c_n\|_2 \ge t) = \p(|\inr{w}{r^n}| \le 1/t). \]
	 The last row $r^n$ of $A'$ is the sum of the last rows of $QM$ and $QUV^T$. Note that the inner product of $w$ with the last row of $QM$ is some fixed parameter; denote it $r$.  Then $QU$ is distributed as $U$ by the rotational invariance of the normal distribution, so the last row of $QUV^T$ is distributed as $Vu^n$ where $u^n  \in \R^{k}$ is a vector of independent Gaussians. 
	 Therefore, $\inr{w}{r^n}$ is distributed as $\inr{w}{Vu^n} + r$, so it suffices to bound the Levy concentration of $\inr{w}{Vu^n}$.  Specifically, we want to show that
	\[\sup_{r \in \mathbb{R}} \p(|\inr{w}{Vu^n} + r| < 1/t)\le
	\lemmafourpointforub{t}
	\]
	where the probability is over the realization of $u^n$ and $V$. This readily follows from Lemma \ref{lem:gaussian}. 
\end{proof}

\begin{proof}[Proof of Theorem \ref{thm:beyond_k}]
Let $s$ be a unit vector chosen uniformly at random from $\mathcal{S}^{n-1}$. By Lemma \ref{lem:single_vec}, we have \[ \p_{A, s}\left(\|A^{-1}s\|_2 \ge t/\sqrt{n}\right)\le
\lemmafourpointforub{t/\sqrt{n}}.
\]
Now with probability $1$, there exits a unique $y$ such that $\|A^{-1}y \|_2 = \|A^{-1}\|$. From Lemma \ref{lem:SVD}, we see that 
$$ \|A^{-1}s \|_2 \ge \| A^{-1}(y^Ts)y\|_2 = |y^Ts| \|A^{-1} \|. $$
Therefore we have
\begin{align*}
\p_{A, s}\left( \|A^{-1}s\|_2 \ge t/\sqrt{n} \right) &\ge \p_{A, s}\left( \|A^{-1} \|  \ge t\text{ and } |s^Ty| \ge 1/\sqrt{n} \right) \\
&=  \p( \|A^{-1} \|  \ge t) \cdot \p_{A,s} \left( |s^Ty| \ge 1/\sqrt{n}\  \big\lvert \  \|A^{-1} \| \ge t \right).
\end{align*}
By the rotational invariance of $s$, we have that 
\[\p( \|A^{-1} \|  \ge t) \cdot \p_{A,s} \left( |s^Ty| \ge 1/\sqrt{n}\  \big\lvert \  \|A^{-1} \| \ge t \right) = \p( \|A^{-1} \|  \ge t) \cdot \p \left( |s^Te_1|  \ge 1/\sqrt{n}\right).\]
From Lemma \ref{lem:gaussian2}, we have that 
\[ \p \left( |s^Te_1|  \ge 1/\sqrt{n}\right) \ge \p(|Z| \ge 1)-1/n \]
where $Z \sim \mathcal{N}(0,1)$. Altogether, it follows that
\begin{align*}
    \p(\|A^{-1}\| \ge t)
    &\le \frac{\p_{A, s}\left( \|A^{-1}s\|_2 \ge t/\sqrt{n} \right)}{\p(|Z| \ge 1)-1/n} \\
    &\le\lemmafourpointforub{t/\sqrt{n}}
    \end{align*}
for $n>10$ as desired.
\end{proof}

\begin{lemma}\label{lem:gaussian}
Let $U, V \in \R^{n\times k}$ have independent $\mathcal{N}(0,1)$ coordinates. Let $M \in \R^{n \times n}$ be a matrix with singular values $s_1 \ge \cdots \ge s_n$ and let $w$ be a vector perpendicular to the first $n-1$ rows of $M + UV^T$. Then for $x_1 \ge 3\sqrt{\log(2nk)}, x_3 \ge nk$, and $x_2 \le 1$,  and all $t > 0$,
\[
\sup_{r\in\R}\p(|\inr{w}{Vu^n}-r|<1/t)\le
\lemmafourpointforub{t}
\]
for some $C,c>0$ where $u^n$ is the last row of $U$.
\end{lemma}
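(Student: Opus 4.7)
The plan is to reduce the bound to Gaussian anticoncentration of a one-dimensional Gaussian and then separately bound the probability that its standard deviation $\|V^Tw\|$ is small.  Since $u^n$ is the last row of $U$ and $w$ depends only on $V$ and $U_{[n-1]}$, conditional on $(V,U_{[n-1]})$ the quantity $\inr{w}{Vu^n}=\inr{V^Tw}{u^n}$ is a centered Gaussian with variance $\|V^Tw\|^2$; Proposition \ref{prop:gaussian_ball} therefore yields
\[\sup_{r\in\R}\p(|\inr{w}{Vu^n}-r|<1/t\st V,U_{[n-1]})\le\frac{C}{t\,\|V^Tw\|}.\]
Splitting on $\mathcal G:=\{\|V^Tw\|\ge x_2/(2(1+\alpha))\}$ with $\alpha:=x_1x_3^{1/2}\sqrt{nk}/s_n(M)$ produces the main term $C(1+\alpha)/(tx_2)$ and reduces everything to bounding $\p(\mathcal G^c)$.

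To lower bound $\|V^Tw\|$, I apply the Sherman--Morrison--Woodbury identity.  Since $M$ is invertible (the standing assumption of Theorem \ref{thm:beyond_k}), let $p:=M^{-1}e_n$, $w_0:=p/\|p\|$ (the deterministic unit vector spanning $\ker M_{[n-1]}$), $q_0:=V^Tw_0$, $H:=V^TM^{-1}U$, and $z:=(I_k+H)^{-1}q_0$.  A direct Woodbury computation with $A:=M+UV^T$ yields $A^{-1}e_n=\|p\|(w_0-M^{-1}Uz)$ and $V^TA^{-1}e_n=\|p\|\,z$, whence
\[\|V^Tw\|=\frac{\|z\|}{\|w_0-M^{-1}Uz\|}\ge\frac{\|z\|}{1+\|M^{-1}U\|\,\|z\|}.\]
Crucially, $w_0$ is deterministic, so $q_0\sim\N(0,I_k)$ unconditionally, providing a clean Gaussian target for anticoncentration.

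Define the high-probability events $\event_1:=\{\max_{i,j}|U_{ij}|\le x_1\}$ (so $\p(\event_1^c)\le 2nk\,e^{-x_1^2/2}\le e^{-x_1^2/4}$ under $x_1\ge 3\sqrt{\log(2nk)}$) and $\event_3:=\{\|V\|_F^2\le x_3\}$ (so $\p(\event_3^c)\le e^{-cx_3}$ by the $\chi^2_{nk}$ tail for $x_3\ge nk$).  On $\event_1\cap\event_3$, bounding operator norms by Frobenius norms gives $\|H\|\le\|V\|_F\|M^{-1}\|\,\|U\|_F\le\alpha$ and $\|M^{-1}U\|\le\alpha/\sqrt{x_3}=:\beta$, hence $\|z\|\ge\|q_0\|/(1+\alpha)$ and, because $x\mapsto x/(1+\beta x)$ is increasing,
\[\|V^Tw\|\ge\frac{\|q_0\|}{(1+\alpha)+\beta\|q_0\|}.\]
Restricting further to $\{\|q_0\|\le\sqrt{x_3}\}$, which fails with probability $\le e^{-cx_3}$ by the $\chi^2_k$ tail, forces $\beta\|q_0\|\le 1+\alpha$ (since $(1+\alpha)/\beta=1/\beta+\sqrt{x_3}\ge\sqrt{x_3}$) and therefore $\|V^Tw\|\ge\|q_0\|/(2(1+\alpha))$.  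Consequently $\mathcal G^c\subseteq\event_1^c\cup\event_3^c\cup\{\|q_0\|>\sqrt{x_3}\}\cup\{\|q_0\|<x_2\}$, and the last event contributes $(2/\pi)^{k/2}x_2^k$ by the standard anticoncentration estimate for the norm of a $k$-dimensional standard Gaussian.  Summing the four contributions gives the stated bound.  The main technical obstacle is arranging the Woodbury quotient so that the clean Gaussian $q_0$ emerges in the lower bound; this depends essentially on the $\sqrt{x_3}$ gap between $\alpha$ and $\beta$ and on the determinism of $w_0$, which makes $q_0$ a genuine standard Gaussian in $\R^k$.
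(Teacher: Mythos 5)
Your proof is correct, and it takes a genuinely different route to the key inequality $\|V^Tw\|\gtrsim x_2/(1+\alpha)$ (with $\alpha=x_1x_3^{1/2}\sqrt{nk}/s_n$), which is the heart of the lemma. The paper's argument is geometric: it decomposes $w=w^\|+w^\perp$ relative to the row span of $M_{[n-1]}$, shows via the relations $\inr{w}{m^i}=-\inr{V^Tw}{u^i}$ and the invertibility of $M$ that $\|w^\|\|$ must be small whenever $\|V^Tw\|$ is small, hence $w^\perp$ (which is parallel to the deterministic kernel vector $y$) is close to a unit vector, and finally it runs a contrapositive syllogism to land on $z=\tfrac12 x_2/(1+\alpha)$ as a valid lower bound. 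Your argument instead applies the Woodbury identity to $A^{-1}e_n$ (noting $w\propto A^{-1}e_n$), which yields the exact algebraic identity $\|V^Tw\|=\|z\|/\|w_0-M^{-1}Uz\|$ with $z=(I_k+V^TM^{-1}U)^{-1}V^Tw_0$, and then bounds both numerator and denominator by operator-norm estimates. The two routes exploit the same structural fact — that $V^T$ applied to the deterministic kernel direction $w_0=y$ of $M_{[n-1]}$ produces a genuine $\mathcal N(0,I_k)$ vector $q_0$ whose norm is anti-concentrated away from zero — and they land on the same threshold with the same error events $\event_1,\event_3$ and the same $(2/\pi)^{k/2}x_2^k$ term from $\p(\|q_0\|<x_2)$. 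Your Woodbury route is algebraically more explicit and makes the role of $q_0$ immediate without the contrapositive dance; the paper's route avoids ever writing down $A^{-1}$ and is slightly lighter on linear-algebra machinery. Two minor remarks: (i) the auxiliary event $\{\|q_0\|\le\sqrt{x_3}\}$ you introduce is actually superfluous — at the threshold $\|q_0\|=x_2\le1\le\sqrt{x_3}$ one already has $\beta x_2\le\alpha\le1+\alpha$, so the simpler bound $\|V^Tw\|\ge x_2/((1+\alpha)+\beta x_2)\ge x_2/(2(1+\alpha))$ holds directly and one fewer tail term is needed; (ii) your use of $\|V\|_F^2\le x_3$ rather than the paper's $\|V\|^2\le x_3$ is coarser, and both versions really require $x_3$ to exceed $\E\|V\|_F^2=nk$ by a constant factor before the chi-squared tail genuinely gives $\exp(-cx_3)$ — but this is an imprecision already present in the paper's proof and does not distinguish your argument.
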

\begin{proof}
Let $m^1, \cdots, m^n$ be the rows of $M$ and let $u^1, \cdots, u^n$ be the rows of $U$. Then the rows of $A=M+UV^T$ are given by $m^i + Vu^i$. Let $y$ be the unit vector orthogonal to the span of $\{m^1, \cdots, m^{n-1}\}$. Consider the following three events:
\begin{align*}
    \mathcal{E}_1 &= \text{ event that every entry of  } U \text{ is at most }  x_1\text{ in absolute value}, \\
    \mathcal{E}_2 &= \text{ event that } \|V^Ty\| \text{ is at least } x_2, \\
    \mathcal{E}_3 &= \text{ event that } \|V\|^2 \text{ is at most } x_3,
\end{align*}
for $x_1 \ge 3\sqrt{\log (2nk)}, x_3 \ge nk,$ and $x_2 \le 1$.  Denote $\mathcal{E} = \mathcal{E}_1 \cup \mathcal{E}_2 \cup \mathcal{E}_3$.   We now show each of these occurs with high probability.
By a standard concentration bound, the maximum of $nk$ i.i.d.\ standard Gaussians is strongly concentrated around $\sqrt{2 \log(2nk)}$. In particular, $\mathcal{E}_1$ happens with probability at least $1-2\exp(-x_1^2/4)$.  Next, each coordinate of $V^Ty$ is distributed as $\mathcal{N}(0,1)$, and $\|V^Ty\|\ge\|V^Ty\|_\infty$, so we may upper bound $\p(\mathcal{E}_2^c)$ by $\p(|g|<x_2)^k$ where $g$ is $\mathcal{N}(0,1)$.  This means $\mathcal{E}_2$ occurs with probability at least $1-(2/\pi)^{k/2}x_2^k$. Lastly, the event $\mathcal{E}_3$ happens with probability at least $1-\exp(-\Omega(x_3))$ by Lemma \ref{lem:tail_s1}.

Now fix some realization of $V$ and $U$ such that $\event$ occurs.  Suppose for some parameter $z<\frac{\sqrt 3}2\frac{s_n}{x_1\cdot \sqrt{nk}}$ that we have $\|V^Tw\|<z$.  We will find a statement which this assumption implies, then take the contrapositive to obtain a lower bound on $\|V^Tw\|$.
From definition of $w$, we know
\[\inr{w}{m^i}+\inr{w}{Vu^i}=0\]
for $1 \le i \le n-1$.  We may apply Cauchy-Schwarz to $\inr{w}{Vu^i}=\inr{V^Tw}{u^i}$ and use event $\event_1$ to bound $\|u^i\|$ and obtain
 \[|\inr{w}{m^i}|\le\|V^Tw\|\cdot\|u^i\| \le z \cdot x_1 \cdot \sqrt{k}.\] 
Decompose $w = w^\| + w^\perp$ where $w^\|$ is in the span of $m^1, \cdots, m^{n-1}$ and $w^\perp$ is in the orthogonal complement. Write $w^\| = \sum_{i = 1}^{n-1} \alpha_i m^i$ for some coefficients $\alpha_i$. Then we have
\begin{align*}
\|w^\|\|^2 = |\langle w^\|, w \rangle|
  &\le\sum_{i=1}^{n-1} |\alpha_i| \cdot |\langle m^i, w \rangle |
\\&\le\|\alpha\|_1\cdot z\cdot x_1\cdot\sqrt{ k}
\\&\le\|\alpha\|  \cdot z\cdot x_1\cdot\sqrt{nk}
\\&\le\|w^\|\|\,\frac{z\cdot x_1\cdot\sqrt{nk}}{s_n}
\end{align*}
where $\alpha$ is the vector of $\alpha_i$'s, and the last step follows since $M$ is non-singular, making $\alpha$ the unique solution to $M^T\alpha=w^\|$. 
Now, note that $y$ and $w^\perp$ are parallel, so
\[
\|w^\perp\|\|V^Ty\|
  ={\|V^Tw^\perp\|}
\le{\|V^Tw\|+\|V^Tw^\|\|}
\le{z+\|V\|\|w^\|\|}
\le z\left(1 + \frac{x_1\cdot x_3^{1/2}\cdot\sqrt{nk}}{s_n}\right)
\]
where the last step follows from $\mathcal{E}_3$.
From $z<\frac{\sqrt 3}2\frac{s_n}{x_1\cdot \sqrt{nk}}$, we have
\[
\|w^\perp\|=\sqrt{1-\|w^\|\|^2}=\sqrt{1-\left(\frac{z\cdot x_1\cdot\sqrt{nk}}{s_n}\right)^2}>\frac12.
\]
Moving $\|w^{\perp}\|$ to the right hand side and using $\event_2$ and the above bound, we arrive at
\[
x_2\le\|V^Ty\|
< 2z\,\left(1 + {x_1\cdot x_3^{1/2}\cdot\sqrt{nk}}/{s_n}\right).
\]
We have thus established the syllogism
\[
\|V^Tw\|\le z\le\frac{\sqrt3}2\frac{s_n}{x_1\cdot \sqrt{nk}}
\implies
\frac12\frac{x_2s_n}{s_n + {x_1\cdot x_3^{1/2}\cdot\sqrt{nk}}}<z.
\]
By taking the contrapositive and setting
\(
z=\frac12\frac{x_2s_n}{s_n + {x_1\cdot x_3^{1/2}\cdot\sqrt{nk}}},
\)
we see that one of the inequalities on the left must fail.  It isn't the second one since $x_2<\sqrt3$, $x_3^{1/2}>1$, and $s_n>0$.  We therefore conclude our selection of $z$ lower bounds $\|V^Tw\|$.

This leads to sufficient anti-concentration.  For any fixed vector $x$, the distribution of $\inr{x}{u^n}$ is the same as $\mathcal{N}(0,\|x\|^2)$.  So, $\inr{V^Tw}{u^n}$ for random $V$ is a mixture of Gaussians, each of which have variance at least $z^2$ when we condition on $\event$.  The Levy concentration is thus easily bounded by
\[
\sup_{r\in\R}\p\left(|\inr{w}{Vu^n}+r|\le\frac1t\st\event\right)\le\frac{\sqrt{2/\pi}}{z\cdot t}.
\]
The desired bound then follows by incorporating a probability bound on the complement of $\mathcal{E}$.
\end{proof}

The following lemmas follow easily from basic properties of the SVD and Gaussian random variables so we omit their proof.

\begin{lemma}\label{lem:SVD}
	Consider a $n \times n$ matrix $M $and $u \in \mathcal{S}^{n-1}$ such that $\|M\| = \|Mu\|_2$. Then for every $v \in \mathbb{R}^n$, we have
	$$ \|Mv\|_2 \ge |u^Tv| \|M\|. $$
\end{lemma}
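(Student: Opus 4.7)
The plan is to reduce the statement to a one-line Pythagorean identity after noting that any unit vector $u$ with $\|Mu\|=\|M\|$ is necessarily an eigenvector of $M^TM$ with eigenvalue $\|M\|^2$. This is standard: apply the spectral theorem to the positive semi-definite matrix $M^TM$ (whose top eigenvalue equals $\|M\|^2$, realized precisely by those unit $x$ achieving $\|Mx\|=\|M\|$), or alternatively argue via Lagrange multipliers for the constrained optimization $\max_{\|x\|=1}\|Mx\|^2$.

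The key step is to decompose an arbitrary $v\in\R^n$ as $v=(u^Tv)u+w$ with $w\perp u$, and observe that $Mu$ and $Mw$ are then automatically orthogonal. Indeed, using $M^TMu=\|M\|^2u$ together with $u^Tw=0$,
\[
\inr{Mu}{Mw}=u^TM^TMw=\|M\|^2\,u^Tw=0.
\]
Applying Pythagoras to the decomposition $Mv=(u^Tv)Mu+Mw$ and using $\|Mu\|=\|M\|$ then gives
\[
\|Mv\|_2^2=(u^Tv)^2\|M\|^2+\|Mw\|_2^2\ge(u^Tv)^2\|M\|^2,
\]
and taking square roots yields the claim.

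I do not foresee any substantive obstacle: once the orthogonality $Mu\perp Mw$ is established, the rest is immediate. The only mild subtlety is that $u$ need not be unique when $\|M\|$ is a repeated singular value, but the argument uses only the defining property $u\in\argmax_{\|x\|=1}\|Mx\|_2$, so any such maximizer works and no specific choice of SVD is required.
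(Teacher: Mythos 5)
Your proof is correct. The paper omits a proof (calling it a routine consequence of SVD), and your argument is essentially the standard one: identifying $u$ as a top eigenvector of $M^TM$, using this to show $Mu \perp Mw$ for any $w \perp u$, and applying Pythagoras to the decomposition $Mv = (u^Tv)Mu + Mw$ — all of which is the content of the SVD applied to this problem. Your note on non-uniqueness of $u$ is also handled correctly, since any maximizer of $\|Mx\|$ on the sphere lies in the top eigenspace of $M^TM$ and the argument only uses $M^TMu = \|M\|^2 u$.
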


\begin{lemma}\label{lem:gaussian2}
	Let $x$ be a uniformly random vector in $\mathcal{S}^{n-1}$ and $Z \sim \mathcal{N}(0,1)$. Then for every $c > 0$,
	$$  \p \left( |x^Te_1|  \ge \sqrt{\frac{c}{n}}\right) \ge \p(|Z| \ge \sqrt{c})-\frac1n .$$
\end{lemma}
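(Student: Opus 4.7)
The plan is to couple $x$ with a standard Gaussian vector. Write $x = g/\|g\|$ for $g\sim\N(0,I_n)$, so $x^T e_1 = g_1/\|g\|$ with $g_1\sim\N(0,1)$, and let $S := \sum_{i=2}^n g_i^2 \sim \chi^2_{n-1}$, independent of $g_1$. For $c < n$, the event $\{|x^T e_1| \ge \sqrt{c/n}\}$ is equivalent to $\{g_1^2(n-c) \ge cS\}$. Conditioning on $S$ and writing $F(a) := \p(Z^2 \ge a)$ yields
\[
\p\!\left(|x^T e_1| \ge \sqrt{c/n}\right) = \E_S\!\left[F\!\left(\tfrac{cS}{n-c}\right)\right].
\]

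The next step exploits the convexity of $F$ on $(0,\infty)$, which is immediate from $-F'(a) = (2\pi a)^{-1/2}e^{-a/2}$ being strictly decreasing. Applying Jensen's inequality using $\E S = n-1$,
\[
\p\!\left(|x^T e_1| \ge \sqrt{c/n}\right) \;\ge\; F\!\left(\tfrac{c(n-1)}{n-c}\right).
\]

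It then remains to compare $F(c(n-1)/(n-c))$ with $F(c) = \p(|Z|\ge\sqrt c)$ across all regimes of $c$. If $c \le 1$, the argument $c(n-1)/(n-c) \le c$ directly gives $F(c(n-1)/(n-c)) \ge F(c)$, with no slack needed. If $c > n/2$, the standard Gaussian tail bound $\p(|Z|\ge\sqrt c) \le \sqrt{2/(\pi c)}\,e^{-c/2}$ is already at most $1/n$ (this is easy to verify for $n>10$, which is the regime used in the theorem), making the claim trivial because the left side is nonnegative. The interesting regime is $1 < c \le n/2$: setting $\delta := c(c-1)/(n-c) > 0$, the monotonicity of $-F'$ gives
\[
F(c) - F(c+\delta) \;\le\; \delta\cdot\frac{e^{-c/2}}{\sqrt{2\pi c}} \;=\; \frac{(c-1)\sqrt c \, e^{-c/2}}{(n-c)\sqrt{2\pi}} \;\le\; \frac{2\, c^{3/2} e^{-c/2}}{n\sqrt{2\pi}},
\]
and the elementary one-variable bound $\max_{c\ge 0} c^{3/2}e^{-c/2} = (3/e)^{3/2} < 1.16$ forces the right side below $1/n$.

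The main obstacle is the intermediate regime $1 < c \le n/2$, where the inequality is genuinely non-trivial and one has to combine the exact Jensen gap $\delta$ with the Gaussian decay $e^{-c/2}/\sqrt c$ to extract the clean $1/n$ constant uniformly in $c$. The convexity of $F$ and the resulting reduction to a one-variable calculus problem are the crucial ingredients; the small-$c$ and large-$c$ cases are then essentially bookkeeping.
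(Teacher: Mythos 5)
Your proof is correct. Since the paper explicitly omits its own proof of this lemma (stating only that it ``follows easily from basic properties of the SVD and Gaussian random variables''), there is no paper argument to compare against, but your reasoning is sound throughout: the reduction $x^Te_1 = g_1/\|g\|$ with $g \sim \N(0, I_n)$, the equivalence $\{|x^Te_1| \ge \sqrt{c/n}\} = \{(n-c)g_1^2 \ge cS\}$ for $c<n$ with $S \sim \chi^2_{n-1}$ independent of $g_1$, the convexity of $F(a) = \p(Z^2 \ge a)$ (its derivative $-F'(a) = (2\pi a)^{-1/2}e^{-a/2}$ is decreasing), Jensen with $\E S = n-1$, and the three-case comparison of $F(c(n-1)/(n-c))$ against $F(c) - 1/n$. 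I verified the intermediate-regime calculus: $(c-1)/(n-c) \le 2c/n$ for $1 < c \le n/2$ reduces to $2c^2 \le n(c+1)$, which holds, and $\max_{c\ge 0} c^{3/2}e^{-c/2} = (3/e)^{3/2} \approx 1.159$ gives $2(3/e)^{3/2}/\sqrt{2\pi} \approx 0.93 < 1$, so the final bound clears $1/n$ for every $n \ge 1$; the restriction $n > 10$ you invoke for the large-$c$ case is likewise unnecessary, since $\sqrt{2/(\pi c)}\,e^{-c/2}\big|_{c=n/2} \le 1/n$ already holds for all $n \ge 1$ (equivalently $\sqrt{n}\,e^{-n/4} \le \sqrt{\pi}/2$), though invoking it does no harm since the calling theorem assumes $n>10$.

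One remark worth making: the authors describe this lemma as ``easy,'' and your Jensen route is somewhat heavier than that adjective suggests. Their likely intent is the special case actually used in the proof of Theorem~\ref{thm:beyond_k}, namely $c=1$, for which your argument collapses to a single line: $\E S/(n-1) = 1$ and convexity of $F$ give $\p(|x^Te_1| \ge 1/\sqrt n) = \E[F(S/(n-1))] \ge F(1) = \p(|Z|\ge 1)$, with no $-1/n$ slack needed at all. Your proof has the virtue of handling the lemma as actually stated, for arbitrary $c>0$, where the $-1/n$ correction is genuinely necessary in the regime $c>1$.
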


\section{Numerical experiments}\label{sec:numerical}

In this section, we numerically demonstrate our theoretical results by giving an example of a sparse family of $n$ by $n$ matrices that are `poorly' conditioned and whose condition number improves significantly after adding a random Gaussian rank $1$ perturbation. We show that this perturbation results in an improvement comparable to what is achieved after adding a dense Gaussian matrix while maintaining a low time complexity for matrix vector product operations.

Our family of $n$ by $n$ matrices will be constructed as follows: $M_n$ will have ones on the anti-diagonal and the first and third off-diagonals above the anti-diagonal. For example, $M_7$ is displayed below.

\[\begin{bmatrix}
\centering
0 & 0 & 0 & 1 & 0 & 1 & 1 \\
0 & 0 & 1 & 0 & 1 & 1 & 0 \\
0 & 1 & 0 & 1 & 1 & 0 & 0 \\
1 & 0 & 1 & 1 & 0 & 0 & 0 \\
0 & 1 & 1 & 0 & 0 & 0 & 0 \\
1 & 1 & 0 & 0 & 0 & 0 & 0 \\
1 & 0 & 0 & 0 & 0 & 0 & 0 
\end{bmatrix}
\]

It is shown in \cite{MO} that $M_n$ is ill-conditioned by showing that the magnitude of the smallest eigenvalue of $M_n$ is of the order $O(n/C^n)$ where $C \approx 1.47$ which implies that the smallest singular value of $M_n$ is also at most $O(n/C^n)$. The second smallest eigenvalue on the other hand is a constant.

\begin{figure}[!htbp]
    \centering
    \subfloat[]{{\includegraphics[width=7.75cm]{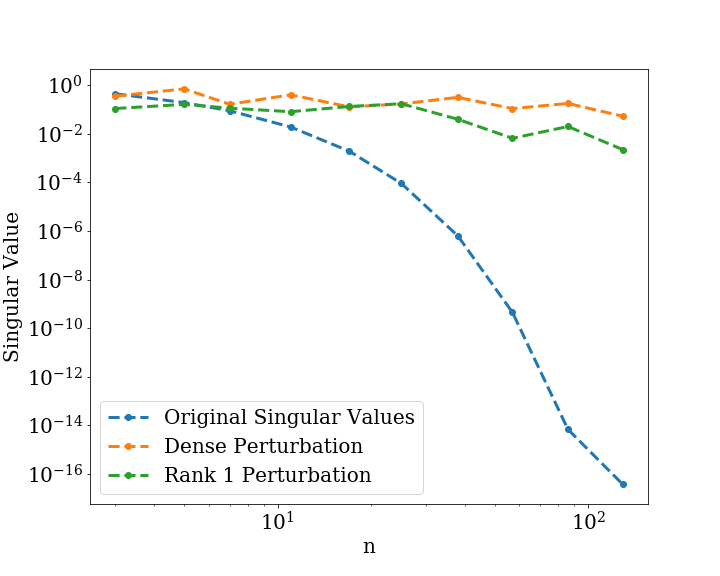} }}%
    \subfloat[]{{\includegraphics[width=7.75cm]{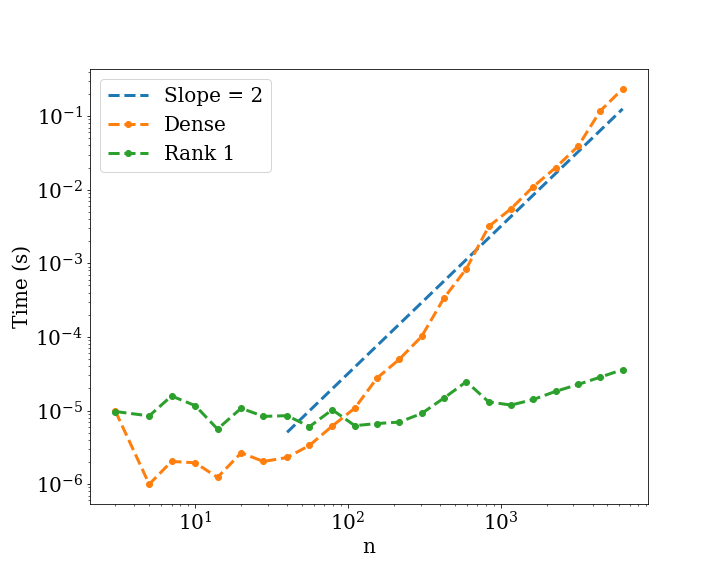} }}%
    \caption{(a) Smallest singular values of the original matrix compared against dense and rank $1$ perturbations. (b) Time taken to perform a matrix vector product after a dense perturbation and a rank $1$ perturbation. The cost for the dense perturbation has a quadratic scaling (slope $ = 2$).}%
    \label{fig:numerical}%
\end{figure}

In Figure \ref{fig:numerical} (a), we show the smallest singular value of $M_n$ for a range of $n$ along with the smallest singular values after a dense and rank $1$ perturbation. As we can see in the log-log plot, the original values are decaying exponentially while the smallest singular value after the rank $1$ perturbation is within a few orders of magnitude of the corresponding value after a dense perturbation. In Figure \ref{fig:numerical} (b), we show the time taken to perform a matrix vector product after a dense and a rank $1$ perturbation. For this task, we used the popular numerical libraries NumPy and SciPy. Since $M_n$ is sparse, it can be represented in a special sparse format to speed up computations. In the case of a rank $1$ perturbation, we only need to store two additional vectors and a matrix vector product $(M+uv^T)x$ can be performed as 
\[(M_n+uv^T)x = M_nx + (v^Tx) \cdot u. \]
However, in the case of a dense perturbation, we need to store a dense matrix $G$ and perform the matrix vector product operation with a vector and a dense matrix resulting in a much slower operation than in the rank $1$ case. Indeed, note that the slope of the `Dense' curve in Figure \ref{fig:numerical} (b) is close to $2$ signifying a quadratic increase in time. Overall, we see that in this case, a rank $1$ update results in a comparable improvement of the condition number of $M_n$ while greatly improving the cost to perform a fundamental matrix operation.

\section{Low-rank noise model for other problems in smoothed analysis } \label{sec:challenges}
In this section, we outline some of the challenges that arise when applying the rank $1$ noise model in other popular problems studied in smoothed analysis. While not a comprehensive survey of all problems, we focus on two of the most studied applications of this framework outside of the condition number. These are the simplex method and $k$-means. For these problems, the standard noise model is the dense one where every entry of the input matrix or input set of points respectively, is independently perturbed by a random Gaussian. We highlight some of the challenges that arise when trying to carry out existing proof techniques for these problems using rank $1$ noise. This ultimately shows that new ideas are required to bypass the lack of independence as we did for the condition number.

\subsection{Simplex method}
The simplex method is one of the most famous applications of the smoothed analysis framework. The goal is to solve a linear program of the form $\max c^Tx$ subject to $Ax \le b$ using the simplex method where the entries of $A \in \mathbb{R}^{m \times n}$ have been perturbed by random noise. Recall that the simplex method operates by moving among the vertices of the polytopes defined by the constrained matrix $A$. The geometric operation of moving from one vertex to another is called a pivot operation and the most commonly analyzed pivot operation with respect to smoothed analysis is the shadow vertex pivot method. 

Without getting into technical details that will lead us too far afield, we note that the shadow vertex pivoting method requires us to calculate the following bound: let $a_i$ for $1 \le i \le m$ denote the rows of the matrix $A$ and let $W$ be a fixed two dimensional subspace. We wish to bound
\[\E[|\text{edges}(\text{conv}(a_1, \cdots, a_m) \cap W)|] \]
where $\text{conv}(a_1, \cdots, a_m)$ is the convex hull of the rows (see \cite{friendlysmooth} for more information). 

To calculate the above bound, we essentially need to understand the probability that $a_j^T \theta \le t$ for a range of values of $j$ and some $t \in \mathbb{R}$ (here $\theta$ represents the normal vector of the line connecting some two points $a_i, a_k$. For the pair $a_i, a_k$ to be on the convex hull, we need the rest of the points to be on one side of the line). In the case that we add independent noise across the rows, this bound is possible to compute due to independence across $a_j$. However, in the case that we add rank $1$ noise $u^Tv$ (here $u \in \mathbb{R}^m, v \in \mathbb{R}^n$) to $A$, these probabilities become intractable using existing methods since $a_j$ satisfying  $a_j^T \theta \le t$ gives us information about all other $a_j'$ for $j' \ne j$ since randomness is shared across the rows.

Nevertheless, it is possible to get a weak result for the smoothed analysis of the simplex method in our low-rank noise model by using a different pivoting operation. It is shown in \cite{geometric_edge} that if the rows satisfy a certain \emph{geometric} property, then using a random pivoting rule results in an expected polynomial number of steps for the simplex method to converge. 

The geometric property is the following:
For any $I \subseteq [m]$, and $j \in [m]$, if $a_j$ is not in the span generated by $a_i, i \in I$, then the distance from $a_j$ to this span is at least $\delta$. We note that the bound on the expected number of steps depends polynomially on $1/\delta$ and other parameters. This geometric property reduces to a singular value estimate as follows. For simplicity, lets focus on $j=1$ and $I = \{2, \cdots, n-1\}$. As in Section \ref{sec:previous}, it follows that $\|A_{[n]}^{-1}e_1\|$ is equal to $1/|w^Ta_1|$ where $w$ is the normal vector of the span of the rows $a_2, \cdots, a_n$. Thus, if $s_n(A_{[n]})$ is `not too small' then $\|A_{[n]}^{-1}e_1\|$ cannot be `too large' and consequently, the distance from $a_1$ to the span of $a_2, \cdots, a_n$ is `not too small' (we are intentionally leaving our specific relations for a high level overview). The caveat is that we need the geometric property to hold between $a_1$ and \emph{every} set of $n-1$ other vectors. However, since the bound of Theorem \ref{thm:main_rank_k} only gives us an inverse polynomial probability, we cannot afford the union bound of $\binom{m}n$ unless $m = n+C$ for some constant $C$, which is not a realistic scenario. 

Lastly, empirical evidence shows that rank $1$ perturbation may not be a suitable if the \emph{original} simplex method (the Dantzig simplex method) is used. In Figure \ref{fig:simplex}, we use the Klee-Minty lower bound \cite{KM} for the Dantzig simplex method and add either a dense Gaussian or a rank $1$ perturbation to the constraint matrix. We then plot the average number of pivot steps taken over twenty independent trials. It can be seen that a rank $1$ perturbation only slightly improves over the exponential number of pivot steps required by the Dantzig simplex method whereas dense perturbations help greatly. 

\begin{figure}[!htbp]
\includegraphics[width=10cm]{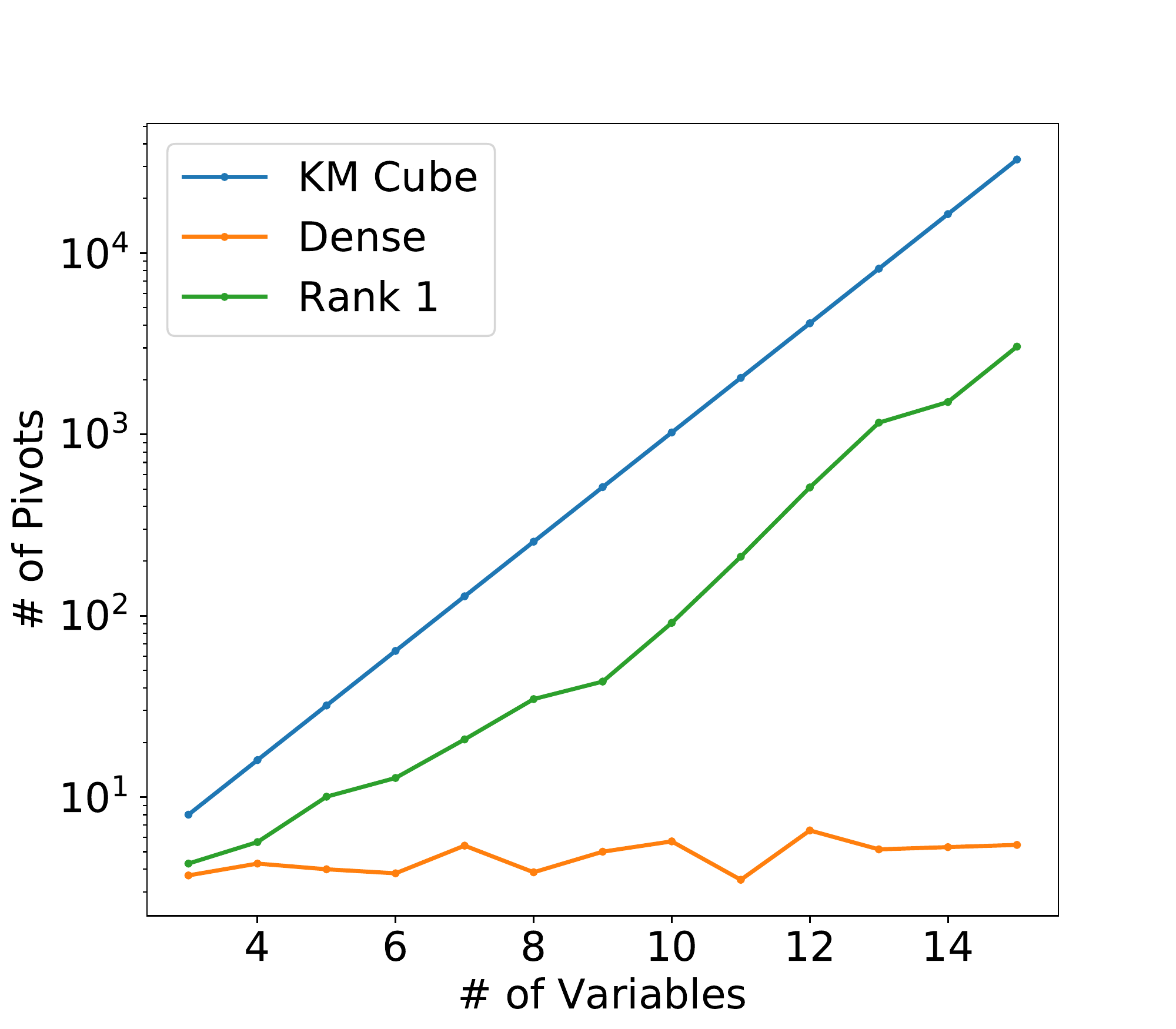}
\centering
\caption{The Dantzig simplex method applied to the Klee-Minty lower bound and its random perturbations.}
\label{fig:simplex}
\end{figure}

We conclude our discussion with a major open problem.
\begin{open}
Is there a pivoting rule for the simplex method that runs in expected polynomial time if we add random rank $1$ noise to the constraint matrix?
\end{open}

\subsection{$k$-means clustering}
Recall that in the $k$-means problem, we are given a set $X$ of $n$ points in $\mathbb{R}^d$ and our goal is to partition the points into $k$ sets $S_i$ to minimize the objective
\[\sum_{i=1}^k\sum_{x \in S_i} \|x-\mu_i\|^2 \]
where $\mu_i$ is the mean of the points in $S_i$. A common heuristic for this problem, also confusingly known as the $k$-means algorithm, or Lloyd's method, is to randomly pick an initial set of $k$ centers, assign each point in $X$ to its closest center, update the means accordingly, and repeat until convergence. In the smoothed analysis framework, it was shown that if each point in $X$ is perturbed by an independent Gaussian vector then convergence happens in polynomially many steps \cite{smoothedkmeans_best}. The existing analysis all crucially rely on the following geometric lemma.

\begin{lemma}\label{lem:kmeans_prior}
Let $x \in \mathbb{R}^d$ be drawn according to a $d$-dimensional Gaussian distribution of standard deviation $\sigma$, and let $B$ be the $d$-dimensional ball of radius $\eps$ centered at the origin. Then $\p(x \in B)\le (\eps/\sigma)^d$.
\end{lemma}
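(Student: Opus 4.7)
The plan is to exploit two elementary facts: the isotropic Gaussian factors into independent one-dimensional coordinates, and the ball $B$ of radius $\eps$ is contained in the cube $[-\eps,\eps]^d$. Writing $x=(x_1,\ldots,x_d)$ with each $x_i\sim\mathcal{N}(0,\sigma^2)$ independently, the containment $B\subseteq[-\eps,\eps]^d$ immediately yields
\[
\p(x\in B)\le\p\bigl(|x_1|\le\eps,\ldots,|x_d|\le\eps\bigr)=\prod_{i=1}^d\p(|x_i|\le\eps),
\]
so the task reduces to a single one-dimensional marginal estimate.

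For that estimate, I would bound the density of $\mathcal{N}(0,\sigma^2)$ pointwise by its peak value $1/(\sigma\sqrt{2\pi})$ and integrate over the interval of length $2\eps$, obtaining
\[
\p(|x_i|\le\eps)\le\frac{2\eps}{\sigma\sqrt{2\pi}}=\sqrt{\frac{2}{\pi}}\cdot\frac{\eps}{\sigma}\le\frac{\eps}{\sigma}.
\]
Taking the product over the $d$ coordinates then yields $\p(x\in B)\le(\eps/\sigma)^d$, as desired.

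An alternative one-line approach would upper bound $\p(x\in B)$ directly by the maximum of the joint density (attained at the origin and equal to $(2\pi\sigma^2)^{-d/2}$) times the volume of $B$, which equals $\pi^{d/2}\eps^d/\Gamma(d/2+1)$. This gives $\eps^d/(\sigma^d\,2^{d/2}\Gamma(d/2+1))$, which is strictly tighter than the claimed bound; the only thing to verify is the easy inequality $2^{d/2}\Gamma(d/2+1)\ge 1$ for all $d\ge 1$. Either route is essentially immediate, so there is no real obstacle here: the lemma is a direct consequence of pointwise control of the Gaussian density combined with independence of the coordinates, and the main choice is simply which of the two presentations is cleanest.
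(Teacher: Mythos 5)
Your proof is correct; note, however, that the paper itself does not prove Lemma \ref{lem:kmeans_prior} --- it is quoted as a known geometric fact from the prior smoothed-analysis literature on $k$-means (e.g.\ \cite{smoothedkmeans_best}), so there is no in-paper argument to compare against. Both of your routes are sound: the cube-containment argument reduces to the one-dimensional peak-density bound $\p(|x_i|\le\eps)\le \sqrt{2/\pi}\,\eps/\sigma\le\eps/\sigma$ and tensorizes by independence, while the density-times-volume argument gives the sharper constant $\eps^d/(\sigma^d\,2^{d/2}\Gamma(d/2+1))$; the inequality $2^{d/2}\Gamma(d/2+1)\ge 1$ holds for all $d\ge1$ since the left side is increasing in $d$ and already exceeds $1$ at $d=1$ (where it equals $\sqrt{\pi/2}$). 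Either presentation is a complete and valid proof of the stated bound.
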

This lemma roughly states that the probability of a random Gaussian being in any ball of radius $\varepsilon$ is at most $\varepsilon^d$, and is used to union bound over exponentially many events in the smoothed analysis of $k$-means.

Surprisingly, this lemma \emph{does not} hold in our `rank $1$' setting. More precisely, we can prove the following probabilistic bound which is a major impediment to understanding the smoothed complexity of the $k$-means problem with rank $1$ noise.

\begin{lemma}\label{lem:kmeans}
Let $x \in \mathbb{R}^d$ be drawn according to a standard $d$-dimensional Gaussian distribution and let $y \in \mathbb{R}$ be a scalar standard Gaussian random variable. If $B$ is the $d$-dimensional ball of radius $\eps$ centered at the origin then $\p(yx \in B) = O(\eps/\sqrt{d})$.
\end{lemma}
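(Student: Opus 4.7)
The plan is to condition on $x$ and exploit one-dimensional Gaussian anti-concentration for $y$. Since the event $yx\in B$ is equivalent to $|y|\cdot\|x\|\le\eps$, and the density of $|y|$ is bounded by $\sqrt{2/\pi}$ on $[0,\infty)$, for every realization of $x$ we get
\[
\p(|y|\cdot\|x\|\le\eps\mid x)\le\sqrt{2/\pi}\cdot\frac{\eps}{\|x\|}.
\]
Taking expectations over $x$ reduces the lemma to the inverse-moment estimate $\E[1/\|x\|]=O(1/\sqrt d)$.

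For that estimate, $\|x\|$ follows a chi distribution with $d$ degrees of freedom, and the cleanest route is the closed form
\[
\E[1/\|x\|]=\frac{\Gamma((d-1)/2)}{\sqrt{2}\,\Gamma(d/2)},
\]
which by Stirling is $\Theta(1/\sqrt d)$. A more elementary alternative is to split on the event $\event=\{\|x\|\ge\sqrt d/2\}$: on $\event$ the integrand $1/\|x\|$ is pointwise bounded by $2/\sqrt d$, while off $\event$ the chi-density $c_d r^{d-1}e^{-r^2/2}$ tames the singularity of $1/r$ through the factor $r^{d-1}$, so that $\int_0^{\sqrt d/2}r^{d-2}e^{-r^2/2}\,dr$ contributes $O(1/\sqrt d)$ as well.

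The main subtlety to watch out for is that a naive high-probability reduction does not suffice. Because $\eps$ may be arbitrarily small, an additive error of the form $\p(\|x\|<\sqrt d/2)$, even when exponentially small in $d$, cannot be absorbed into the target $O(\eps/\sqrt d)$. This is precisely why the conditional Gaussian anti-concentration on $y$ is used rather than a lower bound on $\|x\|$ alone: it makes the contribution of the rare small-$\|x\|$ regime scale \emph{multiplicatively} with $\eps$, so that the singularity of $1/\|x\|$ near $0$ is controlled by the $r^{d-1}$ factor of the chi density rather than being chased by tail bounds on $\|x\|$.
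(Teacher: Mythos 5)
Your proof is correct and takes a cleaner, more conceptual route than the paper's. The paper works directly with the density of the product $Z=y^2\|x\|^2$ via the product-distribution formula, switches the order of integration, bounds the resulting inner integral by the error function estimate $\operatorname{erf}(t)\le 2t$, and lands on the same Gamma ratio $\Gamma((d-1)/2)/\Gamma(d/2)\lesssim 1/\sqrt d$. Your version conditions on $x$ first, uses the elementary anti-concentration bound for $y$ (density at most $1/\sqrt{2\pi}$) to pull out the factor $\eps$ immediately, and then reduces the problem to the known inverse-moment identity $\E[1/\|x\|]=\Gamma((d-1)/2)/(\sqrt 2\,\Gamma(d/2))$. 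The two routes are mathematically parallel — interchanging the order of integration in the paper's double integral is exactly your conditioning step — but yours is more transparent and requires no special-function manipulations beyond Stirling. Your closing remark is also a genuine value-add: you correctly identify why a tail bound on $\|x\|$ alone cannot work (the additive failure probability does not scale with $\eps$), and explain that conditioning makes the small-$\|x\|$ regime contribute multiplicatively in $\eps$, with the $r^{d-1}$ factor of the chi density taming the $1/r$ singularity. This is a point the paper leaves implicit.
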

Note that $yx \in \mathbb{R}^d$. We are considering random variables of this form because if a rank $1$ perturbation was added to $X$, then each row is perturbed by a random vector of the form $yx \in \mathbb{R}^d$. Lemma \ref{lem:kmeans} roughly states that the probability that the random vector $yx$ is in any ball of radius $\eps$ only weakly depends on the dimension $d$. In particular, we do not get an exponentially small probability afforded by Lemma \ref{lem:kmeans_prior} that enables us to union bound over exponentially many events as in the arguments for the smoothed analysis of $k$-means under the standard noise model. 

The intuition for Lemma \ref{lem:kmeans} is as follows. First, note that from standard Gaussian concentration, we have $\|x\| \approx \sqrt{d}$. Treating this as fixed for now, this means that $y\|x\|$ is approximately distributed as a scalar Gaussian distribution with variance $d$. Therefore, from Proposition \ref{prop:gaussian_ball}, it follows that $\p(|y|\|x\| \le \eps) = \Theta(\eps/\sqrt{d})$. We now formalize this argument.

\begin{proof}
Note that $\|x\|_2^2$ is a chi-squared variable with $d$ degrees of freedom. From \cite{product_dist}, we know that the density of the product $Z = \|yx\|_2^2 = y^2 \|x\|_2^2$ is given by
\[f_Z(z) \simeq \frac{1}{2^{d/2}\Gamma(d/2)}\int_0^{\infty}\left(x^{d/2-2} e^{-x/2}\right) \left( \frac{1}{\sqrt{z/x}}e^{-z/(2x)} \right) dx.   \]
Therefore, 
\[
    \p(\|yx\|_2^2 \le \eps^2) \simeq \frac{1}{2^{d/2}\Gamma(d/2)}\int_0^{\eps^2} \int_0^{\infty}\left(x^{d/2-2} e^{-x/2}\right) \left( \frac{1}{\sqrt{z/x}}e^{-z/(2x)} \right) dx \ dz. \]
We now switch the order of summation which is valid since the integrand is positive. From the definition of the error function, we can check that 
\[ \int_0^{\eps^2}\frac{1}{\sqrt{z/x}}e^{-z/(2x)} \ dz \simeq x \cdot \text{erf}(\eps/\sqrt{x}).  \]
We now use the estimate $\text{erf}(t) \le 2t$ which holds for all $t \ge 0$. This gives us
\[
    \int_0^{\infty} x^{d/2-1} e^{-x/2} \text{erf}(\eps/\sqrt{x}) \ dx \lesssim \eps \int_0^{\infty} x^{d/2-3/2} e^{-x/2} \ dx
    =\eps 2^{d/2-1/2}\Gamma(d/2-1/2).
\]
Finally, noting that $\Gamma(d/2-1/2)/\Gamma(d/2) \lesssim 1/\sqrt{d}$ gives us our desired probability bound.
\end{proof}
Note that the above bound is the best that we can hope for. Indeed, we can say that $\|x\|_2^2 = \Omega(d)$ with probability $1/2$ so conditioning on this event, we have that $\Pr(|y|\|x\|_2 \le \eps) = \Omega(\eps/\sqrt{d})$. We note that Lemma \ref{lem:kmeans_prior} is also required for the smoothed analysis of other Euclidean problems such as a local search heuristic for Euclidean TSP \cite{TSP}.

\paragraph{Acknowledgements.}
We thank Sushruth Reddy and Samson Zhou for helpful conversations. We also thank Piotr Indyk and Arsen Vasilyan for helpful feedback on a draft of the paper.

\bibliographystyle{abbrv}
\bibliography{mainbib}

\appendix

\end{document}